\newtheorem{theorem}{Theorem}[section]
\newtheorem{lemma}[theorem]{Lemma}
\newtheorem{corollary}[theorem]{Corollary}
\newtheorem{definition}{Definition }[section]
\newtheorem{example}{Example }[section]
\title{The Dimension of Spline Spaces with Highest Order Smoothness over Hierarchical T-meshes}
\author{Meng Wu \quad Jiansong Deng\footnote{Corresponding author, email: dengjs@ustc.edu.cn}
\quad Falai Chen \\
 School of Mathematical Sciences \\
 University of Science and Technology of China\\
 Hefei, Anhui, 230026 \\
 P. R. China}
\date{}
\begin{document}
\maketitle
\begin{abstract}
This paper discusses the dimension of spline spaces
$\mathbf{S}(m,n,m-1,n-1,\mathscr{T})$ over certain type of
hierarchical T-meshes. The major step is to set up a bijection
between the spline space $\mathbf{S}(m,n,m-1,n-1,\mathscr{T})$ and a
univariate spline space whose definition depends on the l-edges of
the extended T-mesh. We decompose the univariate spline space into
direct sums in the sense of isomorphism using the theory of the
short exact sequence in homological algebra. According to the
decomposition of the univariate spline space, the dimension formula
of the spline space $\mathbf{S}(m,n,m-1,n-1,\mathscr{T})$ over
certain type of hierarchical T-mesh is presented. A set of basis functions
of the spline space is also constructed.

\medskip

\textbf{Keywords:} Dimension formula, spline space, T-mesh, homology.

\end{abstract}

\section{Introduction}
Non-Uniform Rational B-Splines (NURBS) are popular tool to
represent surface models in Computer Aided Geometric Design (CAGD)
and Computer Graphics. However, due to the tensor-product structure of
NURBS, local refinement of surface models based on NURBS is impossible,
and NURBS models generally contain large number of superfluous control points.
To overcome the above drawbacks, Sederberg et al. introduced T-splines, the control
meshes of which allow T-junctions (\cite{Sederberg2003,Sederberg2004}).
T-splines provide local refinement strategy and can reduce the large number of superfluous
control points in NURBS models.



In \cite{Deng2006}, the concept of spline spaces over T-meshes is
introduced.  Different from T-splines, a spline over a T-mesh is
a single polynomial within each cell of the T-mesh, and it achieves
the specified smoothness across the common edges. Spline spaces over
T-meshes are suitable for geometric modeling\cite{Deng2006PHT} and
they can be applied for analysis naturally, since it is easy to
do standard Finite Element Method analysis based on splines over T-meshes
\cite{FEM}.

One major issue in the theory of splines over T-meshes is to study
the dimension of the spline spaces. There have been a few literature
focusing on the problem so far. In \cite{Deng2006}, the dimension
formula for the spline space
$\mathbf{S}(m,n,\alpha,\beta,\mathscr{T})$ is obtained with
constraints $m\ge 2\alpha+1,n\ge 2\beta+1$. The result is further
improved by Li et al. \cite{LiWang}. For the spline spaces with
highest order smoothness where $m \le 2\alpha, n\le 2\beta$, the
authors of the current paper derived a dimension formula for $C^1$
biquadratic spline spaces (that is,
$\mathbf{S}(2,2,1,1,\mathscr{T})$) over hierarchical
T-meshes\cite{Deng2008:21}. Recently, B. Mourrain gives a general
formula for the spline spaces
$\mathbf{S}(m,n,\alpha,\beta,\mathscr{T})$ by homological
techniques\cite{Mourrain2010}. Unfortunately there is a term in the
dimension formula which is very hard to compute in practice.

Contrary to the above positive results, Li and Chen have showed that
the dimension of spline spaces with highest order smoothness over
T-meshes may depend on the geometry besides the topology information
of the T-meshes \cite{InstabilityDimLi}. The result suggests that it
is vain to study the dimension formula over general T-meshes in the
case of highest order of smoothness. In this paper, the dimension
formula of spline spaces $\mathbf{S}(m,n,m-1,n-1,\mathscr{T})$ over
certain type of hierarchical T-meshes will be explored.

There are several methods for establishing the dimension of spline
spaces, such as the B-net method~\cite{Deng2006}, and the smoothing
cofactor method~\cite{WRH1979} and homology
method~\cite{Billera1988}. In the smoothing cofactor method, the
cofactor of a spline associated with the common edge of two adjacent
cells is a univariate polynomial. This is similar to homology theory
in topology. Therefore, using the smoothing cofactor method and
according to a result in \cite{Deng2008:21}, we construct an
isomorphism between a spline space with highest order smoothness
over a T-mesh and a univariate spline space satisfying some
conditions. For a certain type of hierarchical T-mesh, by giving an
order of the interior l-edges of the extended T-mesh, we can then
decompose the univariate spline space into a direct sum, and thus
give a dimension formula for the spline space over hierarchial
T-meshes.

The rest of the paper is organized as follows. In Section 2,
the definitions and some results regarding  T-meshes and spline spaces over
T-meshes are reviewed. In Section 3, an equivalence is set up between
the spline over a T-mesh and a univariate spline space.
The proof of the dimension formula is presented in Section 4. Some examples
are also provided. In Section 5, we conclude the paper with
future research problems. We leave the proof for a key lemma and the construction
of a set of basis functions of the spline space in the appendix.

\section{Spline Spaces over T-meshes}
In this section, we first review some concepts about T-meshes and
spline spaces over T-meshes.

\subsection{Spline spaces over T-meshes}

A {\bf T-mesh} is a rectangular grid that allows T-junctions.
For simplicity, in this paper we consider only regular T-meshes whose boundary grid lines form a rectangle. We adopt
the same definitions for \textbf{vertex}, \textbf{edge}, and \textbf{cell} as in
 \cite{Deng2006}, and the definitions for \textbf{l-edge}, \textbf{interior
 l-edge}, \textbf{associated tensor product mesh} are borrowed from
~\cite{Deng2008:21}. A grid point in a T-mesh is called a
 vertex of the T-mesh. Vertices of a T-mesh are divided into different
 types. For example, in Figure \ref{fig:concerpt}, $\{b_i\}^{i=10}_{i=1}$ are boundary vertices and
 $\{v_i\}^{i=5}_{i=1}$ are interior vertices. $v_2$ is a crossing
 vertex and $\{b_i\}^{i=10}_{i=1}\cup\{v_i\}^{i=5}_{i=1}-\{v_2,b_1,b_3,b_6,b_8\}$
 are T-vertices. The line segment connecting two adjacent vertices
 on a grid line is called an edge of T-mesh such as $v_4v_5,b_9b_{10},v_2v_3$
 in Figure \ref{fig:concerpt}. $b_2v_3$ is a large edge (l-edge for short), which is the longest possible line segment
 consisting of several edges. The boundary of a regular T-mesh consists of four l-edges,which are called boundary l-edges.
 The other l-edges in T-mesh are called interior l-edges.
 A regular T-mesh can be extended to a tensor product mesh,
 called the associated tensor-product mesh, by
 extending all the interior large edges to the boundary, see Figure \ref{fig:concerpt}.

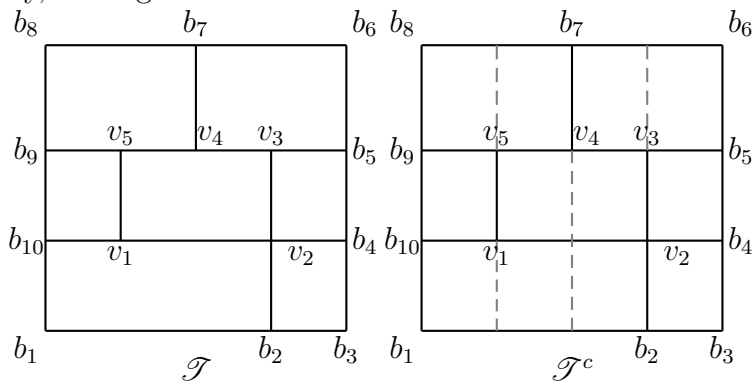
\begin{figure}[!htb]
   \begin{center}
    \begin{pspicture}(0.5,0.5)(10.5,5)
    \psline[linewidth=0.8pt](1,1)(1,4.8)
    \psline[linewidth=0.8pt](2,2.2)(2,3.4)
    \psline[linewidth=0.8pt](3,3.4)(3,4.8)
    \psline[linewidth=0.8pt](4,1)(4,3.4)
    \psline[linewidth=0.8pt](5,1)(5,4.8)
    \psline[linewidth=0.8pt](1,1)(5,1)
    \psline[linewidth=0.8pt](1,2.2)(5,2.2)
    \psline[linewidth=0.8pt](1,3.4)(5,3.4)
    \psline[linewidth=0.8pt](1,4.8)(5,4.8)
    \rput(0.75,0.75){$b_1$}
    \rput(4,0.75){$b_2$}
    \rput(5,0.75){$b_3$}
    \rput(5.25,2.2){$b_4$}
    \rput(5.25,3.4){$b_5$}
    \rput(5.25,5.1){$b_6$}
    \rput(3,5.1){$b_7$}
    \rput(0.75,5.1){$b_8$}
    \rput(0.75,3.4){$b_9$}
    \rput(0.75,2.2){$b_{10}$}
    \rput(2,2){$v_1$}
    \rput(4.4,2){$v_2$}
    \rput(4,3.6){$v_3$}
    \rput(3.2,3.6){$v_4$}
    \rput(2,3.6){$v_5$}
    \rput(3,0.5){$\mathscr{T}$}
    \psline[linewidth=0.8pt](6,1)(6,4.8)
    \psline[linewidth=0.8pt](7,2.2)(7,3.4)
    \psline[linewidth=0.8pt](8,3.4)(8,4.8)
    \psline[linewidth=0.8pt](9,1)(9,3.4)
    \psline[linewidth=0.8pt](10,1)(10,4.8)
    \psline[linewidth=0.8pt](6,1)(10,1)
    \psline[linewidth=0.8pt](6,2.2)(10,2.2)
    \psline[linewidth=0.8pt](6,3.4)(10,3.4)
    \psline[linewidth=0.8pt](6,4.8)(10,4.8)
    \psline[linewidth=0.8pt,linestyle=dashed,linecolor=gray](7,1)(7,2.2)
    \psline[linewidth=0.8pt,linestyle=dashed,linecolor=gray](7,3.4)(7,4.8)
    \psline[linewidth=0.8pt,linestyle=dashed,linecolor=gray](8,1)(8,3.4)
    \psline[linewidth=0.8pt,linestyle=dashed,linecolor=gray](9,3.4)(9,4.8)
    \rput(5.75,0.75){$b_1$}
    \rput(9,0.75){$b_2$}
    \rput(10,0.75){$b_3$}
    \rput(10.25,2.2){$b_4$}
    \rput(10.25,3.4){$b_5$}
    \rput(10.25,5.1){$b_6$}
    \rput(8,5.1){$b_7$}
    \rput(5.75,5.1){$b_8$}
    \rput(5.75,3.4){$b_9$}
    \rput(5.75,2.2){$b_{10}$}
    \rput(7,2){$v_1$}
    \rput(9.4,2){$v_2$}
    \rput(9,3.6){$v_3$}
    \rput(8.2,3.6){$v_4$}
    \rput(7,3.6){$v_5$}
    \rput(8,0.5){$\mathscr{T}^c$}
    \end{pspicture}
  \caption{A T-mesh $\mathscr{T}$ and its associated tensor product mesh $\mathscr{T}^c$.\label{fig:concerpt}}
 \end{center}
 \end{figure}

Given a T-mesh $\mathscr{T}$, $\mathscr{F}$ is the set of all the
cells of $\mathscr{T}$ and $\Omega$ is the region occupied by cells
in $\mathscr{F}$. Spline spaces over T-meshes are defined by
\begin{align}
\label{Splinedef}
  \mathbf{S}(m,n,\alpha,\beta,\mathscr{T}):&=\{f(x,y)\in
    C^{\alpha,\beta}(\Omega):f(x,y)|\phi\in\mathbb{P}_{mn},\forall\phi\in\mathscr{F}\},
\end{align}
where $\mathbb{P}_{mn}$ is the space of all the polynomials with
bi-degree $(m,n)$, and $C^{\alpha,\beta}$ is the space consisting of
all the bivariate functions that are continuous in $\Omega$ with
order $\alpha$ along the $x$ direction and order $\beta$ along the $y$
direction. In this paper, we will focus on the spline space
$\mathbf{S}(m,n,m-1,n-1,\mathscr{T})$.

\subsection{Hierarchical T-meshes and extended T-meshes}


 A hierarchical T-mesh is a special type of T-mesh that has a natural
 level structure~\cite{Deng2006PHT}. It is defined in a recursive fashion. Initially a tensor product mesh (level 0) is presumed. From level $k$ to $k+1$,
 a cell is subdivided at level $k$ into four sub-cells, which are cells at
 level $k+1$, by connecting the middle points of the opposite edges with two straight lines.
 Figure~\ref{fig:HTmesh} illustrates a sequence of hierarchical T-meshes.

\begin{figure}[htpb]
 \begin{center}
\begin{pspicture}(0,0.5)(13,5.5)
\psline[linewidth=0.8pt](1,1)(4,1)
\psline[linewidth=0.8pt](1,2)(4,2)
\psline[linewidth=0.8pt](1,3.5)(4,3.5)
\psline[linewidth=0.8pt](1,4.5)(4,4.5)
\psline[linewidth=0.8pt](1,5.5)(4,5.5)

\psline[linewidth=0.8pt](1,1)(1,5.5)
\psline[linewidth=0.8pt](2,1)(2,5.5)
\psline[linewidth=0.8pt](3,1)(3,5.5)
\psline[linewidth=0.8pt](4,1)(4,5.5)

\rput(2.5,0.5){Level 0}

\psline[linewidth=0.8pt](5,1)(8,1)
\psline[linewidth=0.8pt](5,2)(8,2)
\psline[linewidth=0.8pt](5,3.5)(8,3.5)
\psline[linewidth=0.8pt](5,4.5)(8,4.5)
\psline[linewidth=0.8pt](5,5.5)(8,5.5)

\psline[linewidth=0.8pt](5,1)(5,5.5)
\psline[linewidth=0.8pt](6,1)(6,5.5)
\psline[linewidth=0.8pt](7,1)(7,5.5)
\psline[linewidth=0.8pt](8,1)(8,5.5)

\psline[linewidth=0.8pt,linecolor=red](5,1.5)(7,1.5)
\psline[linewidth=0.8pt,linecolor=red](6,2.75)(7,2.75)
\psline[linewidth=0.8pt,linecolor=red](5,4)(8,4)

\psline[linewidth=0.8pt,linecolor=red](5.5,1)(5.5,2)
\psline[linewidth=0.8pt,linecolor=red](5.5,3.5)(5.5,4.5)
\psline[linewidth=0.8pt,linecolor=red](6.5,1)(6.5,4.5)
\psline[linewidth=0.8pt,linecolor=red](7.5,3.5)(7.5,4.5)
\rput(6.5,0.5){Level 1}

\psline[linewidth=0.8pt](9,1)(12,1)
\psline[linewidth=0.8pt](9,2)(12,2)
\psline[linewidth=0.8pt](9,3.5)(12,3.5)
\psline[linewidth=0.8pt](9,4.5)(12,4.5)
\psline[linewidth=0.8pt](9,5.5)(12,5.5)

\psline[linewidth=0.8pt](9,1)(9,5.5)
\psline[linewidth=0.8pt](10,1)(10,5.5)
\psline[linewidth=0.8pt](11,1)(11,5.5)
\psline[linewidth=0.8pt](12,1)(12,5.5)

\psline[linewidth=0.8pt](9,1.5)(11,1.5)
\psline[linewidth=0.8pt](10,2.75)(11,2.75)
\psline[linewidth=0.8pt](9,4)(12,4)

\psline[linewidth=0.8pt](9.5,1)(9.5,2)
\psline[linewidth=0.8pt](9.5,3.5)(9.5,4.5)
\psline[linewidth=0.8pt](10.5,1)(10.5,4.5)
\psline[linewidth=0.8pt](11.5,3.5)(11.5,4.5)

\psline[linewidth=0.8pt,linecolor=red](9,1.25)(9.5,1.25)
\psline[linewidth=0.8pt,linecolor=red](10,1.25)(10.5,1.25)
\psline[linewidth=0.8pt,linecolor=red](10,1.75)(10.5,1.75)
\psline[linewidth=0.8pt,linecolor=red](10,2.375)(11,2.375)
\psline[linewidth=0.8pt,linecolor=red](10.5,3.75)(11.5,3.75)

\psline[linewidth=0.8pt,linecolor=red](9.25,1)(9.25,1.5)
\psline[linewidth=0.8pt,linecolor=red](10.25,1)(10.25,2.75)
\psline[linewidth=0.8pt,linecolor=red](10.75,2)(10.75,2.75)
\psline[linewidth=0.8pt,linecolor=red](10.75,3.5)(10.75,4)
\psline[linewidth=0.8pt,linecolor=red](11.25,3.5)(11.25,4)
\rput(10.5,0.5){Level 2}
\end{pspicture}
  \caption{Hierarchical T-meshes.\label{fig:HTmesh}}
\end{center}
\end{figure}
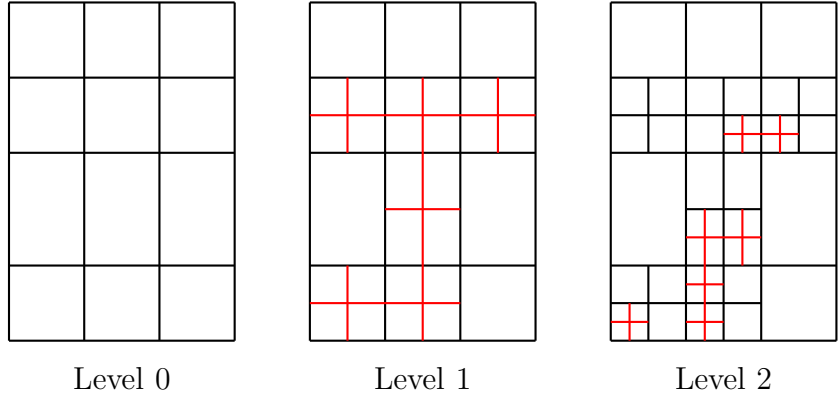

For a T-mesh $\mathscr{T}$,  the extended T-mesh
$\mathscr{T}^\varepsilon$ associated with $\mathscr{T}$ is an
enlarged T-mesh by copying each horizontal boundary line of
$\mathscr{T}$ $m$ times, and each vertical boundary line of
$\mathscr{T}$ $n$ times, and by extending all the line segments with
an end point on the boundary of $\mathscr{T}$~\cite{Deng2008:21}.
This can be made precise by the following example.
Figure~\ref{exTmesh} illustrates a T-mesh (left) and the extended
T-mesh (right) associated with degree $(3,3)$.

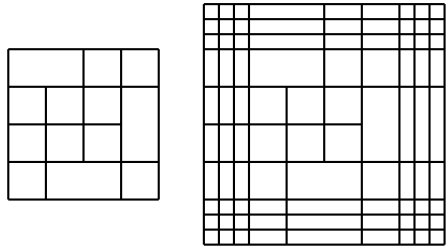
\begin{figure}[htpb]
\begin{center}
\begin{pspicture}(0,0)(6.4,3.2)
\psline[linewidth=0.8pt](0.6,0.6)(2.6,0.6)
\psline[linewidth=0.8pt](0.6,1.1)(2.6,1.1)
\psline[linewidth=0.8pt](0.6,1.6)(2.1,1.6)
\psline[linewidth=0.8pt](0.6,2.1)(2.6,2.1)
\psline[linewidth=0.8pt](0.6,2.6)(2.6,2.6)
\psline[linewidth=0.8pt](0.6,0.6)(0.6,2.6)
\psline[linewidth=0.8pt](1.1,0.6)(1.1,2.1)
\psline[linewidth=0.8pt](1.6,1.1)(1.6,2.6)
\psline[linewidth=0.8pt](2.1,0.6)(2.1,2.6)
\psline[linewidth=0.8pt](2.6,0.6)(2.6,2.6)
\psline[linewidth=0.8pt](3.2,0)(6.4,0)
\psline[linewidth=0.8pt](3.2,0.2)(6.4,0.2)
\psline[linewidth=0.8pt](3.2,0.4)(6.4,0.4)
\psline[linewidth=0.8pt](3.2,0.6)(6.4,0.6)
\psline[linewidth=0.8pt](3.2,1.1)(6.4,1.1)
 \psline[linewidth=0.8pt](3.2,1.6)(5.3,1.6)
\psline[linewidth=0.8pt](3.2,2.1)(6.4,2.1)
\psline[linewidth=0.8pt](3.2,2.6)(6.4,2.6)
\psline[linewidth=0.8pt](3.2,2.8)(6.4,2.8)
\psline[linewidth=0.8pt](3.2,3)(6.4,3)
\psline[linewidth=0.8pt](3.2,3.2)(6.4,3.2)
\psline[linewidth=0.8pt](3.2,0)(3.2,3.2)
\psline[linewidth=0.8pt](3.4,0)(3.4,3.2)
\psline[linewidth=0.8pt](3.6,0)(3.6,3.2)
\psline[linewidth=0.8pt](3.8,0)(3.8,3.2)
\psline[linewidth=0.8pt](4.3,0)(4.3,2.1)
\psline[linewidth=0.8pt](4.8,1.1)(4.8,3.2)
\psline[linewidth=0.8pt](5.3,0)(5.3,3.2)
\psline[linewidth=0.8pt](5.8,0)(5.8,3.2)
\psline[linewidth=0.8pt](6,0)(6,3.2)
\psline[linewidth=0.8pt](6.2,0)(6.2,3.2)
\psline[linewidth=0.8pt](6.4,0)(6.4,3.2)
\end{pspicture}
\caption{A T-mesh $\mathscr{T}$ and its extended T-mesh associated
with degree $(3,3)$. \label{exTmesh}}
\end{center}
\end{figure}

\subsection{Homogeneous boundary conditions}

A spline space over a given T-mesh $\mathscr{T}$ with homogeneous boundary conditions is defined by~\cite{Deng2008:21}
$$\overline{\mathbf{S}}(m,n,\alpha,\beta,\mathscr{T}):=\{f(x,y)\in
C^{\alpha,\beta}(\mathbb{R}^2):f(x,y)|_{\phi}\in\mathbb{P}_{mn},\forall\phi\in\mathscr{F}
\mbox{ and } f|_{\mathbb{R}^2 \backslash \Omega}\equiv 0\},$$
where $\mathbb{P}_{mn},\mathscr{F}$ is defined as before. One important observation
in \cite{Deng2008:21} is that the two spline spaces $\mathbf{S}(m,n,m-1,n-1,\mathscr{T})$ and
$\overline{\mathbf{S}}(m,n,m-1,n-1,\mathscr{T}^\varepsilon)$ are closely related.

\begin{theorem} \cite{Deng2008:21}
  Given a T-mesh $\mathscr{T}$, and let $\mathscr{T}^\varepsilon$ be the
  extended T-mesh associated with
  $\mathbf{S}(m,n,m-1,n-1,\mathscr{T})$. Then
  \begin{align}
    \mathbf{S}(m,n,m-1,n-1,\mathscr{T})&=\overline{\mathbf{S}}(m,n,m-1,n-1,\mathscr{T}^\varepsilon)|_{\mathscr{T}},\\
     \dim\mathbf{S}(m,n,m-1,n-1,\mathscr{T})&=\dim\overline{\mathbf{S}}(m,n,m-1,n-1,\mathscr{T}^\varepsilon).
    \end{align}
\end{theorem}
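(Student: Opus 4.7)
The plan is to establish a linear bijection between $\overline{\mathbf{S}}(m,n,m-1,n-1,\mathscr{T}^\varepsilon)$ and $\mathbf{S}(m,n,m-1,n-1,\mathscr{T})$ realized by restriction to $\Omega$. The first displayed equality will then be the content of surjectivity of this restriction, and the dimension identity will be immediate once injectivity is also verified. Well-definedness of the restriction map is essentially automatic: since $\mathscr{T}^\varepsilon$ is obtained from $\mathscr{T}$ only by copying boundary l-edges outward and by extending interior l-edges through the resulting padding strips, the cells of $\mathscr{T}^\varepsilon$ lying in $\Omega$ coincide exactly with the cells of $\mathscr{T}$, and global $C^{m-1,n-1}$ smoothness obviously descends to $\Omega$.

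For surjectivity I would construct an explicit extension. Given $f\in\mathbf{S}(m,n,m-1,n-1,\mathscr{T})$, the candidate $\tilde f$ must coincide with $f$ on $\Omega$, vanish off $\Omega^\varepsilon$, lie in $\mathbb{P}_{mn}$ cellwise on $\mathscr{T}^\varepsilon$, and be globally $C^{m-1,n-1}$. I would build $\tilde f$ first on the four edge strips and then on the four corner strips. On, say, the bottom edge strip, restricting to any fixed vertical line turns the problem into a one-dimensional spline extension: find a piecewise polynomial of degree $n$ on the sub-intervals cut by the copies of the horizontal boundary line, $C^{n-1}$ across the interior copies, with $C^{n-1}$ matching data at the top (supplied by the traces of $f$ and its $y$-derivatives) and with vanishing data of order $n-1$ at the bottom of $\Omega^\varepsilon$. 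A direct count shows that the number of free coefficients equals the number of linear constraints and that the system is non-singular, giving a unique 1D extension. Because those boundary data depend polynomially in $x$ of degree $m$ on each bottom cell of $\mathscr{T}$, the solution coefficients also depend polynomially in $x$, so $\tilde f$ ends up piecewise polynomial on the sub-cells of the strip cut out by the extensions of the interior l-edges of $\mathscr{T}$. The four corner strips are then treated analogously, using the already-constructed edge-strip extensions as boundary data on two sides and vanishing conditions on the other two.

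Injectivity is the mirror image. If $\tilde f\in\overline{\mathbf{S}}(m,n,m-1,n-1,\mathscr{T}^\varepsilon)$ satisfies $\tilde f|_\Omega=0$, then by $C^{m-1,n-1}$ smoothness all transverse derivatives of $\tilde f$ up to the appropriate order vanish on $\partial\Omega$ as approached from outside; combined with the vanishing conditions on $\partial\Omega^\varepsilon$ built into $\overline{\mathbf{S}}$, the very same 1D uniqueness statement, applied strip by strip and then corner by corner, forces $\tilde f\equiv 0$ on all of $\mathscr{T}^\varepsilon\setminus\Omega$, hence on $\mathbb{R}^2$. This gives both conclusions of the theorem.

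The hardest part, I expect, is not the 1D counting itself, which is a routine polynomial-interpolation argument, but verifying that the column-by-column extensions glue together into an honest bivariate spline on $\mathscr{T}^\varepsilon$. One must check that the 1D extension coefficients depend polynomially on the transverse variable with the correct bi-degree, that they are compatible with the cell subdivision induced by the extended interior l-edges inside each strip, and that at each corner the two edge-strip constructions agree with the corner-strip construction along the shared edges. Once these compatibility checks are in place, the restriction map is a linear bijection and both equalities stated in the theorem follow at once.
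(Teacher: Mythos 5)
The paper does not actually prove this statement: it is quoted with a citation to \cite{Deng2008:21}, so there is no in-paper argument to compare yours against line by line. Your restriction--extension plan is, however, essentially the standard route to this result and is sound in outline: the restriction map is well defined because the extension only adds cells outside $\Omega$ and extends edges outward, so the cells of $\mathscr{T}^\varepsilon$ inside $\Omega$ are those of $\mathscr{T}$; surjectivity and injectivity both reduce to the univariate extension problem in the padded strips. Two remarks on the parts you left as checks. First, the nonsingularity of your square $1$D system need not be argued by determinant counting: its homogeneous solutions are exactly compactly supported $C^{n-1}$ splines of degree $n$ over $n+1$ knots, and by the paper's Lemma~\ref{dim_ledge} (dimension $(r-n-1)_+$ with $r=n+1$) that space is zero, which gives uniqueness and hence solvability of the inhomogeneous system in one stroke; this is also precisely where the number of copied boundary lines matters, since one fewer copy would destroy uniqueness and one more would destroy existence of the zero-extension. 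Second, the corner compatibility you flag as the hardest point is genuine but routine once you observe that the strip extensions are given by \emph{fixed} linear solution operators acting in $y$ (resp.\ $x$) with coefficients depending only on the added knots; these operators act on different variables and therefore commute, and they preserve piecewise-polynomial dependence and $C^{m-1}$ (resp.\ $C^{n-1}$) smoothness in the transverse variable, which yields both the cellwise bi-degree $(m,n)$ structure on the strip subdivision induced by the extended l-edges and the agreement of the two edge-strip constructions with the corner construction. With those two points made explicit, your argument establishes both the restriction identity and the dimension equality, in the same spirit as the proof in the cited reference.
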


Based on the above theorem, we only have to consider spline spaces over T-meshes with
homogeneous boundary conditions.

\section{Equivalent spline spaces}
In this section, we will give an equivalent description of spline
spaces $\overline{\mathbf{S}}(m,n,m-1,n-1,\mathscr{T}^\varepsilon)$
by using the smoothing cofactor method.

\subsection{Smoothing cofactor method}

As before, let $\mathscr{T}$ and $\mathscr{T}^\varepsilon$ be a
T-mesh and its extension respectively. Let $\mathscr{F}=\{C_1,C_2,\cdots,C_k\}$ be
the set of all the cells in $\mathscr{T}^\varepsilon$, and
$\Omega_i$ be the region occupied by $C_i\in\mathscr{F}$. Denote
$\Omega_{k+1}=\mathbb{R}^2 \backslash \cup_{i=1}^{k}\Omega_i$ and
$\mathscr{F}^\varepsilon=\{\Omega_1,\Omega_2,\cdots,\Omega_{k+1}\}$.

Let $U_1,U_2,U_3,U_4~\in\mathscr{F}^\varepsilon$ be four regions in
adjacent positions as shown in Figure~\ref{cells}. For a spline
function
$f(x,y)\in\overline{\mathbf{S}}(m,n,m-1,n-1,\mathscr{T}^\varepsilon)$,
denote $f_i(x,y)$ be the bivariate polynomial which coincides with
$f(x,y)$ on $U_i$, $i=1,2,3,4$. Note that if $f_i(x,y)=f_j(x,y)$ for
two adjacent cells $U_i$ and $U_j$, then $U_i$ and $U_j$ can be
merged into a single region, and in this case $(x_0,y_0)$ is a
T-junction.

 \begin{figure}[htp]
 \begin{center}
 \begin{pspicture}(0,0)(4,4)
\psline[linewidth=0.8pt](0,2)(4,2)
\psline[linewidth=0.8pt](2,0)(2,4) \rput(1,1){\Large{$U_2$}}
\rput(3,1){\Large{$U_3$}} \rput(3,3){\Large{$U_4$}}
\rput(1,3){\Large{$U_1$}} \psdots(2,2)\rput(2.6,1.6){$(x_0,y_0)$}
 \end{pspicture}
  \caption{Smoothing conditions in adjacent cells}\label{cells}
 \end{center}
\end{figure}
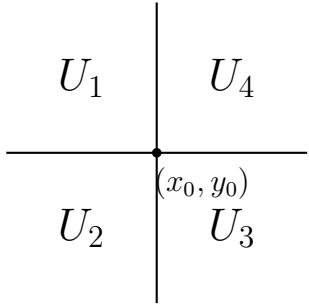

By the smoothing cofactor method, we have the following
relationship between $f_i(x,y)$:
\begin{lemma}\cite{LiWang,InstabilityDimLi}
\label{lemma1:fi relationship} Let $f_i(x,y),i=1,2,3,4$ be defined
as in the preceding paragraph. Then there exist a constant
$k\in\mathbb{R}$ and polynomials $a(y)\in\mathbb{P}_n[y]$,
$b(x)\in\mathbb{P}_m[x]$ such that
\begin{align*}
  f_1(x,y)&= f_2(x,y)+b(x)(y-y_0)^n ,\\
  f_3(x,y)&= f_2(x,y)+a(y)(x-x_0)^m ,\\
  f_4(x,y)&=
  f_2(x,y)+a(y)(x-x_0)^m+b(x)(y-y_0)^n+k(x-x_0)^m(y-y_0)^n,
\end{align*}
where $(x_0,y_0)$ is the vertex of $\mathscr{T}^ \varepsilon$ in
Figure~\ref{cells}. Furthermore, $a(y),b(x)$ and $k$ are uniquely
determined by $f_i(x,y)$, $i=1,2,3,4$, specifically
\begin{align}
\label{a(y)}
a(y)&=\frac{1}{m!}\left(\frac{\partial^mf_3(x_0,y)}{\partial
x^m}-\frac{\partial^mf_2(x_0,y)}{\partial x^m}\right),
\end{align}
\begin{align}
\label{b(x)}
b(x)&=\frac{1}{n!}\left(\frac{\partial^nf_1(x,y_0)}{\partial
y^n}-\frac{\partial^nf_2(x,y_0)}{\partial y^n}\right),
\end{align}
and
\begin{align}\label{kvalue}
    k&=\frac{1}{m!}\frac{1}{n!}\frac{\partial^{m+n}(f_2(x,y)+f_4(x,y)-f_1(x,y)-f_3(x,y))}{\partial x^m\partial
    y^n}.
\end{align}
$a(y)$ and $b(x)$ are smoothing co-factors associated the edges
between adjacent cells. The constant $k$ is called the conformality factor
associated with the common vertex $(x_0,y_0)$.

\end{lemma}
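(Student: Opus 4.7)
The plan is to extract each polynomial difference from the $C^{m-1,n-1}$ smoothness across a single edge, and then reconcile the two ways of reaching $f_4$ from $f_2$ to pin down the conformality constant $k$.

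First, I would handle the edges one at a time. The horizontal edge separating $U_2$ from $U_1$ lies on the line $y=y_0$, and $C^{n-1}$ continuity in the $y$ direction across that edge means $f_1-f_2$, viewed as a polynomial in $(x,y)$ of bi-degree $(m,n)$, vanishes to order $n$ at $y=y_0$. Since the degree in $y$ is at most $n$, this forces $f_1(x,y)-f_2(x,y)=b(x)(y-y_0)^n$ with $b\in\mathbb{P}_m[x]$. The analogous argument across the vertical edge $x=x_0$ between $U_2$ and $U_3$ yields $f_3-f_2=a(y)(x-x_0)^m$ with $a\in\mathbb{P}_n[y]$. Applying the same reasoning to the two edges bounding $U_4$ gives $f_4-f_3=\tilde b(x)(y-y_0)^n$ and $f_4-f_1=\tilde a(y)(x-x_0)^m$ for some $\tilde a\in\mathbb{P}_n[y]$ and $\tilde b\in\mathbb{P}_m[x]$.

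Next I would equate the two expressions for $f_4-f_2$ obtained by going around the vertex clockwise versus counterclockwise. Subtracting yields the identity
\begin{equation*}
\bigl(\tilde a(y)-a(y)\bigr)(x-x_0)^m=\bigl(\tilde b(x)-b(x)\bigr)(y-y_0)^n.
\end{equation*}
The right-hand side is divisible by $(y-y_0)^n$ in the polynomial ring, so $\tilde a(y)-a(y)$ must be divisible by $(y-y_0)^n$; but its $y$-degree is at most $n$, so $\tilde a(y)-a(y)=k(y-y_0)^n$ for a unique constant $k\in\mathbb{R}$. Substituting back forces $\tilde b(x)-b(x)=k(x-x_0)^m$ with the same $k$. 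Combining with $f_4=f_1+\tilde a(y)(x-x_0)^m$ and the earlier formula for $f_1-f_2$ gives the claimed expression for $f_4$.

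For the explicit formulas, I would differentiate the relation $f_3-f_2=a(y)(x-x_0)^m$ $m$ times in $x$: since $\partial_x^m[a(y)(x-x_0)^m]=m!\,a(y)$ is independent of $x$, we may evaluate at $x=x_0$ and obtain the stated formula for $a(y)$; the expression for $b(x)$ is symmetric. For $k$, the assembled identity gives $f_2+f_4-f_1-f_3=k(x-x_0)^m(y-y_0)^n$, so $\partial_x^m\partial_y^n$ applied to this equation returns $k\,m!\,n!$, yielding \eqref{kvalue}. Uniqueness of $a$, $b$, $k$ follows immediately from these explicit differentiation formulas.

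No step is a serious obstacle; the only point requiring care is the algebraic step that forces the \emph{same} constant $k$ in the definitions of $\tilde a-a$ and $\tilde b-b$, i.e., observing that equality of a polynomial divisible by $(x-x_0)^m$ with one divisible by $(y-y_0)^n$ forces divisibility by the product. The rest is a direct unpacking of the smoothing-cofactor conditions.
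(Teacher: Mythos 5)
Your argument is correct and is exactly the standard smoothing-cofactor derivation: the paper itself states this lemma as a cited result from the literature rather than proving it, and your proof (divisibility across each edge from the $C^{m-1}$ and $C^{n-1}$ matching conditions, then reconciling the two paths from $f_2$ to $f_4$ to get a single constant $k$, then differentiating to obtain the explicit formulas and uniqueness) is the expected one. No gaps.
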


Based on the above lemma, we convert the dimension problem into the study of
a univariate spline space. We first introduce the following definition.

\begin{definition}
Let $E^h$ be a horizontal l-edge in $\mathscr{T}^\varepsilon$, and
$\{v_1,v_2,\cdots, v_r\}$ be vertices on $E^h$. Assume the $x$-coordinates of the
vertices are $x_1<x_2<\ldots<x_r$ respectively. 
We define a univariate spline space with homogeneous boundary conditions associated with $E^h$:
\begin{align}
  \overline{\mathbf{S}}(m,m-1,E^h): =& \left\{p(x)\in C^{m-1}(\mathbb{R}):p(x)|_{[x_i,x_{i+1}]}=p_i(x)\in\mathbb{P}_m[x],\right. \nonumber \\
  & i=1,2,\ldots,r-1, \left.\mbox{ and }~ p(x)|_{\mathbb{R}\backslash [x_1,x_r]}\equiv 0\right\}.
\end{align}
Similarly, for a vertical l-edge $E^v$, we can define a univariate spline space
$\overline{\mathbf{S}}(n,n-1,E^v)$ associated with it.
\end{definition}

By the homogeneous boundary conditions, it is easy to see that for any $p(x)\in\overline{\mathbf{S}}(m,m-1,E^h)$, there exist constants $k_1,\ldots,k_r$ such that
\begin{equation}\label{equation1 of l-edge}
\sum_{i=1}^rk_i(x-x_i)^m \equiv 0.
\end{equation}
Equation \eqref{equation1 of l-edge} is equivalent to a linear
system associated with $E^h$:
\begin{align}\label{linear equation of eq1}
\left\{
  \begin{array}{ll}
    \sum_{i=1}^rk_i=0, \\[2mm]
    \sum_{i=1}^rk_ix_i = 0,  \\[2mm]
    \cdots,  \\[2mm]
    \sum_{i=1}^rk_ix_i^m = 0.
  \end{array}
\right.
\end{align}

 For a vertical l-edge $E^v$, there is a similar equation
\begin{equation}\label{equation2 of l-edge}
\sum_{i=1}^sk_i(y-y_i)^n \equiv 0,
\end{equation}
which is also equivalent to a linear system associated with $E^v$:

\begin{align}\label{linear equation of eq2}
\left\{
  \begin{array}{ll}
    \sum_{i=1}^sk_i=0, \\[2mm]
    \sum_{i=1}^sk_iy_i = 0,\\[2mm]
    \cdots,  \\[2mm]
    \sum_{i=1}^sk_iy_i^n = 0.
  \end{array}
\right.
\end{align}

By \eqref{linear equation of eq1} and \eqref{linear equation of eq2}, one immediately has
\begin{lemma}\cite{LiWang}\label{dim_ledge}
$$\dim \overline{\mathbf{S}}(m,m-1,E^h)=(r-m-1)_+,\qquad
\dim \overline{\mathbf{S}}(n,n-1,E^v)=(s-n-1)_+.$$
Here $u_+=\max(0,u)$.
\end{lemma}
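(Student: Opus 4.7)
The plan is to identify $\overline{\mathbf{S}}(m,m-1,E^h)$ with the null space of the linear system \eqref{linear equation of eq1} via the truncated power basis, and then compute the nullity by a standard Vandermonde argument.

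First I would establish that the map
$$(k_1,\ldots,k_r)\ \longmapsto\ p(x):=\sum_{i=1}^r k_i(x-x_i)_+^m$$
is a linear bijection from the solution set of \eqref{linear equation of eq1} onto $\overline{\mathbf{S}}(m,m-1,E^h)$. Each truncated power $(x-x_i)_+^m$ lies in $C^{m-1}(\mathbb{R})$, is piecewise polynomial of degree $m$ with its unique breakpoint at $x_i$, and vanishes on $(-\infty,x_i]$. Hence any such combination $p$ automatically belongs to $C^{m-1}(\mathbb{R})$, is piecewise polynomial of degree $m$ on the partition $x_1<\cdots<x_r$, and vanishes on $(-\infty,x_1]$ for free; the further vanishing on $(x_r,+\infty)$ is exactly the polynomial identity $\sum_{i=1}^r k_i(x-x_i)^m\equiv 0$, i.e.\ equation \eqref{equation1 of l-edge}. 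Conversely, given $p\in\overline{\mathbf{S}}(m,m-1,E^h)$, the coefficients $k_i$ can be recovered inductively by reading off the jump of $p^{(m)}$ at each breakpoint $x_i$, starting from the left where $p\equiv 0$; uniqueness follows from the linear independence of the truncated powers on $\mathbb{R}$.

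Next I would convert \eqref{equation1 of l-edge} into linear algebra by the binomial theorem: the coefficient of $x^\ell$ in $\sum_{i=1}^r k_i(x-x_i)^m$ equals $\binom{m}{\ell}(-1)^{m-\ell}\sum_i k_i x_i^{m-\ell}$, so the identity is equivalent to the $m+1$ scalar equations $\sum_i k_i x_i^j=0$ for $j=0,1,\ldots,m$, which is precisely \eqref{linear equation of eq1}. Its coefficient matrix is the $(m+1)\times r$ Vandermonde-type matrix
$$V=\begin{pmatrix} 1 & 1 & \cdots & 1\\ x_1 & x_2 & \cdots & x_r\\ \vdots & \vdots & & \vdots\\ x_1^m & x_2^m & \cdots & x_r^m\end{pmatrix}.$$

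Finally, because the nodes $x_1,\ldots,x_r$ are pairwise distinct, any $\min(m+1,r)$ columns of $V$ assemble into a square Vandermonde matrix of full rank, hence $\rank V=\min(m+1,r)$ and $\dim\ker V=r-\min(m+1,r)=(r-m-1)_+$. Combined with the bijection of the first step this yields $\dim\overline{\mathbf{S}}(m,m-1,E^h)=(r-m-1)_+$, and the vertical statement $\dim\overline{\mathbf{S}}(n,n-1,E^v)=(s-n-1)_+$ follows by exchanging the roles of $(x,m,r)$ with $(y,n,s)$. The argument is elementary throughout; the only mildly delicate point is the passage from the single polynomial identity \eqref{equation1 of l-edge} to the $m+1$ independent scalar conditions \eqref{linear equation of eq1}, which is immediate once one notes that the change of basis between the monomials $\{x^\ell\}$ and the shifted monomials is invertible.
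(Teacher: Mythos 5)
Your argument is correct and follows essentially the same route the paper takes: the paper states the lemma as an immediate consequence of the linear systems \eqref{linear equation of eq1}--\eqref{linear equation of eq2} (citing Li--Wang rather than proving it), and your truncated-power identification plus the Vandermonde rank count $\rank V=\min(m+1,r)$ simply fills in those standard details. The only cosmetic slip is the claim that any $\min(m+1,r)$ columns form a square Vandermonde matrix when $r<m+1$ (there one should pass to the top $r\times r$ submatrix), but this does not affect the conclusion.
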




\subsection{Conformality vector spaces}

To study the dimension of spline space $\overline{\mathbf{S}}(m,n,m-1,n-1,\mathscr{T}^\varepsilon)$, we need to consider the
conformality conditions for all the (horizontal and vertical) l-edges. Thus we introduce
the following definition.

\begin{definition}\label{conformality-vector-space}
Let $E^h_i$, $i=1,2,\ldots,p$ be all the horizontal l-edges of $\mathscr{T}^\varepsilon$,
and $E^v_j$, $j=1,2,\ldots,q$ be all the vertical l-edges of $\mathscr{T}^\varepsilon$. Define
a linear space $W[\mathscr{T}^\varepsilon]$ by
$$W[\mathscr{T}^\varepsilon]:=\{\mathbf{k}=(k_1,k_2,\cdots,k_v)^T: L^h_i=0,L^v_j=0,\,i=1,\cdots,p,\,j=1,\cdots,q\},$$
where $v$ is the number vertices of $\mathscr{T}^\varepsilon$, $k_i$ is the conformality
factor corresponding to $i$-th vertex of $\mathscr{T}^\varepsilon$, and $L^h_i=0$
and $L^v_j=0$ are linear systems associated with the l-edges $E^h_i$ and $E^v_j$
respectively. $W[\mathscr{T}^\varepsilon]$ is called the conformality vector space of
$\overline{\mathbf{S}}(m,n,m-1,n-1,\mathscr{T}^\varepsilon)$. Similarly, one can define
the conformality vector space $W[E]$ of $\overline{\mathbf{S}}(m,m-1,E^h)$ (or
$\overline{\mathbf{S}}(n,n-1,E^v)$) associated with a l-edge.
\end{definition}

The following facts should be noted regarding $W[\mathscr{T}^\varepsilon]$.
\begin{enumerate}
  \item  $W[\mathscr{T}]$ can also be defined over a general T-mesh
  $\mathscr{T}$ (besides an extended T-mesh). For any $f(x,y)\in \overline{\mathbf{S}}(m,n,m-1,n-1,\mathscr{T})$,
   there is a unique corresponding vector $\mathbf{k}\in W[\mathscr{T}]$ called \textbf{conformality vector}.
  \item $k_i$, which is the $i-$th component of $\mathbf{k}$, is the conformality
  factor corresponding to $i-$th vertex in $\mathscr{T}^\varepsilon$.
  This vertex is the intersection point of two l-edges.
  So $k_i$ has to satisfy  both equations \eqref{linear equation of eq1} and
  \eqref{linear equation of eq2} associated with the two l-edges. Once $\mathbf{k}$
   is determined, the smoothing co-factors $a(y)$ and $b(x)$ can be constructed accordingly. Figure~\ref{fig:k-value interploration} illustrates an example, where  $a(y)=k_5(y-y_1)^n+k_4(y-y_2)^n+k_3(y-y_3)^n$, $b(x)=k_1(x-x_1)^m+k_2(x-x_2)^m$.
   $a(y)$ is determined by $k_3,k_4,k_5$ whose corresponding vertices lie beneath the edge associating with $a(y)$;  $b(x)$ is determined by $k_1,k_2$ whose corresponding vertices
   lie on the left of the edge corresponding to $b(x)$.

  \begin{figure}[htpb]
 \begin{center}
\begin{pspicture}(0,-1)(10,4)
\psline[linewidth=0.8pt](1,0)(5,0)
\psline[linewidth=0.8pt](1,1)(5,1)
\psline[linewidth=0.8pt](1,2)(4,2)
\psline[linewidth=0.8pt](1,3)(5,3)
\psline[linewidth=0.8pt](1,4)(5,4)
\psline[linewidth=0.8pt](1,0)(1,4)
\psline[linewidth=0.8pt](2,0)(2,4)
\psline[linewidth=0.8pt](3,0)(3,3)
\psline[linewidth=0.8pt](4,0)(4,4)
\psline[linewidth=0.8pt](5,0)(5,4)
\psline[linewidth=0.8pt](6,0)(10,0)
\psline[linewidth=0.8pt](6,1)(10,1)
\psline[linewidth=0.8pt](6,2)(9,2)
\psline[linewidth=0.8pt](6,3)(10,3)
\psline[linewidth=0.8pt](6,4)(10,4)
\psline[linewidth=0.8pt](6,0)(6,4)
\psline[linewidth=0.8pt](7,0)(7,4)
\psline[linewidth=0.8pt](8,0)(8,3)
\psline[linewidth=0.8pt](9,0)(9,4)
\psline[linewidth=0.8pt](10,0)(10,4)
 \rput(1,-0.5){$x_1$}
\rput(6,-0.5){$x_1$}
 \rput(2,-0.5){$x_2$}  \rput(7,-0.5){$x_2$}
 \rput(3,-0.5){$x_3$}\rput(8,-0.5){$x_3$}
\rput(4,-0.5){$x_4$}\rput(9,-0.5){$x_4$}
 \rput(5,-0.5){$x_5$}\rput(10,-0.5){$x_5$}
\rput(0.5,0){$y_1$}\rput(5.5,0){$y_1$}
\rput(0.5,1){$y_2$}\rput(5.5,1){$y_2$}
\rput(0.5,2){$y_3$}\rput(5.5,2){$y_3$}
\rput(0.5,3){$y_4$}\rput(5.5,3){$y_4$}
 \rput(0.5,4){$y_5$} \rput(5.5,4){$y_5$}
\rput(6,2){\color[rgb]{0.5,0.5,0.5} $k_1$}
\rput(7,2){\color[rgb]{0.5,0.5,0.5} $k_2$}
\rput(8,2){\color[rgb]{0.5,0.5,0.5} $k_3$}
\rput(9,2){\color[rgb]{0.5,0.5,0.5} $k_7$}
\rput(8,0){\color[rgb]{0.5,0.5,0.5} $k_5$}
\rput(8,1){\color[rgb]{0.5,0.5,0.5} $k_4$}
\rput(8,3){\color[rgb]{0.5,0.5,0.5} $k_6$} \rput(8,2.5){\color{red}
$a(y)$} \rput(7.5,2){\color{red} $b(x)$}
\end{pspicture}
\caption{Conformality vector}\label{fig:k-value interploration}
 \end{center}
\end{figure}
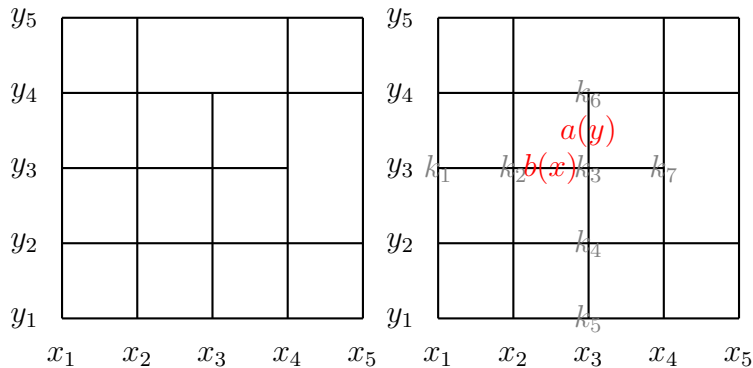


\begin{figure}[hbt]
 \begin{center}
\begin{pspicture}(0,-1)(10,5)
\psline[linewidth=0.8pt](1,0)(5,0)
\psline[linewidth=0.8pt](1,1)(5,1)
\psline[linewidth=0.8pt](1,2)(3,2)
\psline[linewidth=0.8pt,linecolor=red](3,2)(4,2)
\psline[linewidth=0.8pt](1,3)(5,3)
\psline[linewidth=0.8pt](1,4)(5,4)
\psline[linewidth=0.8pt](1,0)(1,4)
\psline[linewidth=0.8pt](2,0)(2,4)
\psline[linewidth=0.8pt](3,0)(3,3)
\psline[linewidth=0.8pt](4,0)(4,4)
\psline[linewidth=0.8pt](5,0)(5,4)
\psline[linewidth=0.8pt](6,0)(10,0)
\psline[linewidth=0.8pt](6,1)(10,1)
\psline[linewidth=0.8pt](6,2)(8,2)
\psline[linewidth=0.8pt,linecolor=red,linestyle=dashed](8,2)(9,2)
\psline[linewidth=0.8pt](6,3)(10,3)
\psline[linewidth=0.8pt](6,4)(10,4)
\psline[linewidth=0.8pt](6,0)(6,4)
\psline[linewidth=0.8pt](7,0)(7,4)
\psline[linewidth=0.8pt](8,0)(8,3)
\psline[linewidth=0.8pt](9,0)(9,4)
\psline[linewidth=0.8pt](10,0)(10,4)
 \rput(1,-0.5){$x_1$}
\rput(6,-0.5){$x_1$}
 \rput(2,-0.5){$x_2$}  \rput(7,-0.5){$x_2$}
 \rput(3,-0.5){$x_3$}\rput(8,-0.5){$x_3$}
\rput(4,-0.5){$x_4$}\rput(9,-0.5){$x_4$}
 \rput(5,-0.5){$x_5$}\rput(10,-0.5){$x_5$}
\rput(0.5,0){$y_1$}\rput(5.5,0){$y_1$}
\rput(0.5,1){$y_2$}\rput(5.5,1){$y_2$}
\rput(0.5,2){$y_3$}\rput(5.5,2){$y_3$}
\rput(0.5,3){$y_4$}\rput(5.5,3){$y_4$}
 \rput(0.5,4){$y_5$} \rput(5.5,4){$y_5$}
 \rput(3,4.5){$\mathscr{T}$}\rput(8,4.5){$\mathscr{T}'$}
\rput(6,2){\color[rgb]{0.5,0.5,0.5} $k_1$}
\rput(1,2){\color[rgb]{0.5,0.5,0.5} $k_1$}
\rput(7,2){\color[rgb]{0.5,0.5,0.5} $k_2$}
\rput(2,2){\color[rgb]{0.5,0.5,0.5} $k_2$}
\rput(8,2){\color[rgb]{0.5,0.5,0.5} $k_3$}
\rput(3,2){\color[rgb]{0.5,0.5,0.5} $k_3$}
\rput(4,2){\color[rgb]{0.5,0.5,0.5} $k_7$}
\rput(8,0){\color[rgb]{0.5,0.5,0.5} $k_5$}
\rput(3,0){\color[rgb]{0.5,0.5,0.5} $k_5$}
\rput(8,1){\color[rgb]{0.5,0.5,0.5} $k_4$}
\rput(3,1){\color[rgb]{0.5,0.5,0.5} $k_4$}
\rput(8,3){\color[rgb]{0.5,0.5,0.5} $k_6$}
\rput(3,3){\color[rgb]{0.5,0.5,0.5} $k_6$}
\end{pspicture}
\caption{The naught value of a conformality factor.}\label{fig:T and T'}
 \end{center}
\end{figure}
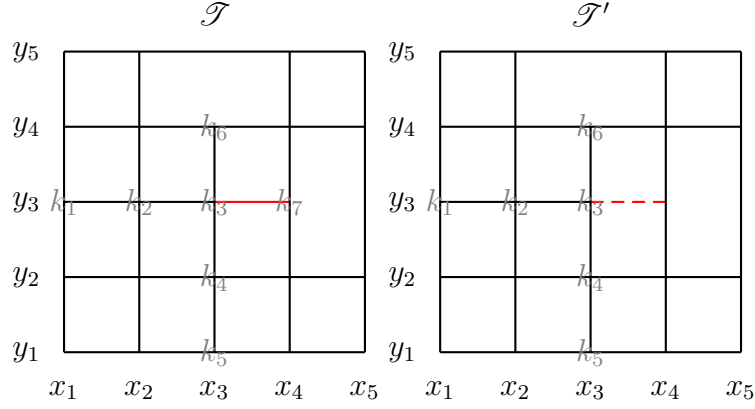

\item  A naught value of $k_i$ which is the conformality factor corresponding to a T-vertex
 results in the vanishing of the corresponding vertex in the T-mesh $\mathscr{T}$.
 An example is illustrated in Figure~\ref{fig:T and  T'} where $k_7=0$,
 and in this case $\overline{\mathbf{S}}(m,n,m-1,n-1,\mathscr{T})=\overline{\mathbf{S}}(m,n,m-1,n-1,\mathscr{T}')$,
 or $\mathscr{T}$ degenerates to $\mathscr{T}'$.
\end{enumerate}

 As an example, we discuss the conformality
vectors of B-spline functions.

\begin{example}\label{example:Bspline curve }
Let $\Delta:-\infty<x_1<x_2<\ldots<x_r<\infty$ ($r>m+1$) be a
partition of $\mathbb{R}$, and $\overline{\mathbf{S}}(m,m-1,\Delta)$
be the B-spline function space defined over $\Delta$, that is,
$\overline{\mathbf{S}}(m,m-1,\Delta)$ consists of piecewise
polynomials of degree $m$ with $C^{m-1}$ continuity over $\Delta$.
The B-spine basis function $N[x_i,x_{i+1},\ldots,x_{i+m+1}]$ is an
element in $\overline{\mathbf{S}}(m,m-1,E_i)$, where $E_i$ is the
interval $[x_i,x_{i+m+1}]$ with the partition:
$x_i<x_{i+1}<\ldots<x_{i+m+1}$. Here $1\le i\le r-m-1$. The
conformality vector $\mathbf{k}=(k_i,k_{i+1},\ldots,k_{i+m+1})$ of
$N[x_i,x_{i+1},\ldots,x_{i+m+1}]$ can be obtained by solving the
associated linear system \eqref{linear equation of eq1}. It is easy
to see that $k_jk_{j+1}<0$, $j=i,i+1,\ldots,i+m$ and $k_i>0$.
\end{example}


\begin{example}
\label{example:Bspline surface} Let
$\overline{\mathbf{S}}(m,m-1,\Delta_x)$ be the B-spline function
space defined over $\Delta_x:-\infty<x_1<x_2<\ldots<x_r<\infty$
($r>m+1$), and $\overline{\mathbf{S}}(n,n-1,\Delta_y)$ be the
B-spline function space defined over
$\Delta_y:-\infty<y_1<y_2<\ldots<y_s<\infty$ ($s>n+1$). We denote
$\mathscr{T}_\otimes$ as a tensor product mesh $\Delta_x\times
\Delta_y$.
It is easy to see that, $f(x)\in
\overline{\mathbf{S}}(m,m-1,\Delta_x)$ and $g(y)\in
\overline{\mathbf{S}}(n,n-1,\Delta_y)$ implies that $f(x)g(y) \in
\overline{\mathbf{S}}(m,n,m-1,n-1,\mathscr{T}_\otimes)$.

Now assume that the B-spline basis functions
$N[x_i,x_{i+1},\ldots,x_{i+m+1}]\in
\overline{\mathbf{S}}(m,m-1,\Delta_x)$ and
$N[y_j,y_{j+1},\ldots,y_{j+n+1}]\in
\overline{\mathbf{S}}(n,n-1,\Delta_y)$ have conformality vectors
${\bf k}^1:=(k_i^1,k_{i+1}^1,\ldots,k_{i+m+1}^1)$ and ${\bf
k}^2:=(k_j^2,k_{j+1}^2,\ldots,k_{j+n+1}^2)$ respectively. Then by
equation~\eqref{kvalue}, the B-spline basis function
$N[x_i,x_{i+1},\ldots,x_{i+m+1}]\cdot
N[y_j,y_{j+1},\ldots,y_{j+n+1}]\in
\overline{\mathbf{S}}(m,n,m-1,n-1,\mathscr{T}_\otimes)$ has a
conformality vector ${\bf k}^1 \otimes {\bf k}^2$ which is a vector
of dimension $(m+2)(n+2)$ with elements $k_p^1k_q^2$,
$p=i,i+1,\ldots,i+m+1,\,q=j,j+1,\ldots,j+n+1$.
\end{example}




\subsection{Equivalence of spline spaces}

Based on the above preparations, we obtain a mapping $\mathscr{K}$  between
the spline space $\overline{\mathbf{S}}(m,n,m-1,n-1,\mathscr{T}^\varepsilon)$ and the
conformality vector space $W[\mathscr{T}^\varepsilon]$:
\begin{equation}\mathscr{K}: \overline{\mathbf{S}}(m,n,m-1,n-1,\mathscr{T}^\varepsilon)\longrightarrow W[\mathscr{T}^\varepsilon].\end{equation}

By Equation \eqref{kvalue}, $\mathscr{K}$ is a linear mapping due to the linear
property of the operator $\partial^{m+n}~/{\partial x^m\partial y^n}$. In fact,
$\mathscr{K}$ is an isomorphic mapping.
\begin{theorem}
\label{isohomophism} The mapping
  $\mathscr{K}: \overline{\mathbf{S}}(m,n,m-1,n-1,\mathscr{T}^\varepsilon)\longrightarrow W[\mathscr{T}^\varepsilon]$
is bijective.
\end{theorem}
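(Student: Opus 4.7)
The plan is to prove injectivity and surjectivity of $\mathscr{K}$ separately, both resting on the fact that a conformality vector $\mathbf{k}\in W[\mathscr{T}^\varepsilon]$ determines every smoothing cofactor on every interior edge of $\mathscr{T}^\varepsilon$ through formulas of the type $a(y)=\sum_i k_i(y-y_i)^n$ and $b(x)=\sum_j k_j(x-x_j)^m$, where the sum runs over the vertices of the ambient l-edge lying on the appropriate side of the edge under consideration, as illustrated in Figure~\ref{fig:k-value interploration}.

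For injectivity, I would take $f\in\ker\mathscr{K}$. If every component $k_i$ of $\mathscr{K}(f)$ vanishes, then the formulas above force every smoothing cofactor $a(y)$ and $b(x)$ attached to an interior edge to vanish identically. Applying Lemma \ref{lemma1:fi relationship} at every interior vertex then gives $f_1=f_2=f_3=f_4$ around each such vertex, so $f$ coincides with a single bivariate polynomial on all of $\mathbb{R}^2$. The homogeneous boundary condition $f|_{\mathbb{R}^2\setminus\Omega}\equiv 0$ forces this polynomial to be zero, hence $f\equiv 0$.

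For surjectivity, I would build $f$ from $\mathbf{k}$ by cell-by-cell propagation. Pick the unbounded exterior cell $\Omega_{k+1}$ as a base point and set $f|_{\Omega_{k+1}}=0$. For any other cell, define $f$ on it by following a path of edges from $\Omega_{k+1}$ and inserting the jump $b(x)(y-y_0)^n$ or $a(y)(x-x_0)^m$ across each traversed edge, with $a$ and $b$ assembled from $\mathbf{k}$ as above. The hard step, and the main obstacle, is to show this construction is independent of the chosen path. Path independence reduces to two local consistency checks: consistency around a single interior vertex, which is exactly the content of the fourth equation in Lemma \ref{lemma1:fi relationship} together with formula \eqref{kvalue}; and consistency along a single l-edge, which is exactly the content of the linear systems \eqref{linear equation of eq1} and \eqref{linear equation of eq2} defining $W[\mathscr{T}^\varepsilon]$, since these systems assert precisely that the two possible one-sided sums defining a cofactor on a given edge agree and that no residual jump persists on crossing the boundary of $\Omega$.

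Once path independence is established, the remaining verifications are essentially bookkeeping. Each piecewise piece $f|_{\Omega_i}$ is a polynomial of bi-degree at most $(m,n)$ because every jump lies in $\mathbb{P}_{mn}$; the jumps $b(x)(y-y_0)^n$ and $a(y)(x-x_0)^m$ vanish to order $n$ in $y$ and order $m$ in $x$ across the edge they decorate, yielding $C^{m-1,n-1}$ regularity globally; and propagating across all boundary l-edges returns $f$ to zero on every exterior cell, so the homogeneous boundary condition is preserved. A direct computation via \eqref{kvalue} at each vertex then shows $\mathscr{K}(f)=\mathbf{k}$, completing the surjectivity argument and hence the proof that $\mathscr{K}$ is bijective.
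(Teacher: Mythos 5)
Your proposal is correct and follows essentially the same route as the paper: injectivity because a vanishing conformality vector forces all smoothing cofactors to vanish, so $f$ is a single polynomial annihilated by the homogeneous boundary conditions, and surjectivity by reconstructing the spline from the cofactors assembled out of $\mathbf{k}$. The only difference is one of detail: the paper delegates the surjectivity construction to its remark following Definition~\ref{conformality-vector-space}, whereas you spell out the cell-by-cell propagation and reduce path independence to the vertex relations of Lemma~\ref{lemma1:fi relationship} and the l-edge systems \eqref{linear equation of eq1}--\eqref{linear equation of eq2}, which is a legitimate filling-in of the same argument.
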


\begin{proof} We first prove that $\mathscr{K}$ is injective. It is enough to show that
$\mathscr{K}f={\bf 0}$ implies $f\equiv 0$ for $f\in \overline{\mathbf{S}}(m,n,m-1,n-1,\mathscr{T}^\varepsilon)$. By the remark 3 following {\bf Definition}~\ref{conformality-vector-space}, if the conformality vector of $f(x,y)$ is a
zero vector, then $f(x,y)$ is a single polynomial over $\mathscr{T}^\varepsilon$. By the homogeneous  boundary conditions, $f(x,y)\equiv 0$. Thus $\mathscr{K}$ is injective.

Next we show $\mathscr{K}$ is surjective. By the remark 2 following {\bf Definition}~\ref{conformality-vector-space}, for a given conformality vector ${\bf k}\in
W[\mathscr{T}^\varepsilon]$, one can construct a smoothing cofactor for each edge of the T-mesh $\mathscr{T}^\varepsilon$, and thus obtains a spline function $f(x,y)\in \overline{\mathbf{S}}(m,n,m-1,n-1,\mathscr{T}^\varepsilon)$ corresponding to ${\bf k}$, that is, $\mathscr{K}$ is surjective. Thus the mapping $\mathscr{K}$ is bijective.
\end{proof}

By the above theorem, the spline space
$\overline{\mathbf{S}}(m,n,m-1,n-1,\mathscr{T}^\varepsilon)$ is
isomorphic to the conformality vector space $W[\mathscr{T}^\varepsilon]$, and
\begin{equation}\label{isomorphic_tspline}
\dim\overline{\mathbf{S}}(m,n,m-1,n-1,\mathscr{T}^\varepsilon)=\dim W[\mathscr{T}^\varepsilon].
\end{equation}

Similarly, one can show that
\begin{equation}\label{isomorphic_ledge}
\overline{\mathbf{S}}(m,m-1,E^h) \cong W[E^h], \qquad
  \overline{\mathbf{S}}(n,n-1,E^v) \cong W[E^v]
\end{equation}
for a horizontal l-edge $E^h$ and a vertical l-edge $E^v$.

In the following, we only have to analyze the structure of $W[\mathscr{T}^\varepsilon].$

\section{The dimension formula}
In this section, we will derive a dimension formula for the spline
$\overline{\mathbf{S}}(m,n,m-1,n-1,\mathscr{T}^\varepsilon)$ over a certain type
of hierarchical T-meshes by direct sum decomposition of the spline space.
The decomposition is according to each interior l-edge of the T-mesh $\mathscr{T}^\varepsilon$. A set of basis functions of the spline space is also constructed.
\subsection{Some definitions}
We first introduce a certain type of hierarchical T-meshes associated with
degree $m, n$ and denote it by $\mathscr{T}_{m,n}$.

\begin{definition}
  Let
  $\mathscr{T}_\otimes=[x_0,x_1,\cdots,x_p]\times[y_0,y_1,\cdots,y_q]$
  be a tensor product mesh, and let
  $X=\{x_0,x_{m-1},\cdots,x_{s(m-1)},x_p\}$ and $Y=\{y_0,y_{n-1},\cdots,y_{t(n-1)},y_q\}$,
  where $s$ and $t$ are unique integers such that $0<p-s(m-1)\leq m-1,\,0<q-t(n-1)\leq
  n-1$. The domain $\Omega=[x_0,x_p]\times[y_0,y_q]$ is subdivided by the lines
  $\{x=x_i,y=y_j:x_i\in X,y_j\in Y\}$ into sub-domains, each of which is occupied by a local
   tensor product mesh of size $(m-1)\times (n-1)$ (or smaller near the right or upper boundary lines).
    Each subdomain is called a $(m,n)$-subdomain of $\mathscr{T}_\otimes$.
a $(m,n)$-subdomain is called s boundary one, if it is near the
boundary lines of T-mesh.
    A subdomain is said subdivided if each cell in
  the subdomain is subdivided (a cell is subdivided by connecting the middle points of oppose sides
  of the cell with two line segments). A subdomain is called isolated if it is subdivided while its adjacent subdomains
  (two subdomains are called adjacent if they share a common boundary line segment) are not subdivided and it is not boundary one.
\end{definition}

Figure~\ref{fig:process of subdivision}(a) illustrates an example
where a tensor product mesh is subdivided into $(3,3)$-subdomains by
blue lines. In Figure~\ref{fig:process of subdivision}(b), two
subdomains are further subdivided by red lines, and these one
subdomain is isolated.


 \begin{figure}[hbpt]
 \begin{center}
 \begin{pspicture}(0,0)(18,7)
 \psset{unit=0.8cm}
   \psline[linewidth=0.8pt](1,1)(6,1)
   \psline[linewidth=0.8pt](1,2)(6,2)
   \psline[linewidth=0.8pt,linecolor=blue](1,3)(6,3)
   \psline[linewidth=0.8pt](1,4)(6,4)
   \psline[linewidth=0.8pt,linecolor=blue](1,5)(6,5)
   \psline[linewidth=0.8pt](1,6)(6,6)
   \psline[linewidth=0.8pt](1,7)(6,7)
   \psline[linewidth=0.8pt](1,1)(1,7)
   \psline[linewidth=0.8pt](2,1)(2,7)
   \psline[linewidth=0.8pt,linecolor=blue](3,1)(3,7)
   \psline[linewidth=0.8pt](4,1)(4,7)
   \psline[linewidth=0.8pt,linecolor=blue](5,1)(5,7)
   \psline[linewidth=0.8pt](6,1)(6,7)

       \rput(3.5,0){(a)}

   \psline[linewidth=0.8pt](7,1)(12,1)
   \psline[linewidth=0.8pt](7,2)(12,2)
   \psline[linewidth=0.8pt,linecolor=blue](7,3)(12,3)
   \psline[linewidth=0.8pt](7,4)(12,4)
   \psline[linewidth=0.8pt,linecolor=blue](7,5)(12,5)
   \psline[linewidth=0.8pt](7,6)(12,6)
   \psline[linewidth=0.8pt](7,7)(12,7)
   \psline[linewidth=0.8pt](7,1)(7,7)
   \psline[linewidth=0.8pt](8,1)(8,7)
   \psline[linewidth=0.8pt,linecolor=blue](9,1)(9,7)
   \psline[linewidth=0.8pt](10,1)(10,7)
   \psline[linewidth=0.8pt,linecolor=blue](11,1)(11,7)
   \psline[linewidth=0.8pt](12,1)(12,7)

   \rput(9.5,0){(b)}
   \psline[linewidth=0.8pt,linecolor=red](9,3.5)(11,3.5)
   \psline[linewidth=0.8pt,linecolor=red](9,4.5)(11,4.5)
   \psline[linewidth=0.8pt,linecolor=red](11,2.5)(12,2.5)
   \psline[linewidth=0.8pt,linecolor=red](11,1.5)(12,1.5)
   \psline[linewidth=0.8pt,linecolor=red](9.5,3)(9.5,5)
   \psline[linewidth=0.8pt,linecolor=red](10.5,3)(10.5,5)
   \psline[linewidth=0.8pt,linecolor=red](11.5,1)(11.5,3)
   \psline[linewidth=0.8pt](13,1)(18,1)
   \psline[linewidth=0.8pt](13,2)(18,2)
   \psline[linewidth=0.8pt,linecolor=blue](13,3)(18,3)
   \psline[linewidth=0.8pt](13,4)(18,4)
   \psline[linewidth=0.8pt,linecolor=blue](13,5)(18,5)
   \psline[linewidth=0.8pt](13,6)(18,6)
   \psline[linewidth=0.8pt](13,7)(18,7)
   \psline[linewidth=0.8pt](13,1)(13,7)
   \psline[linewidth=0.8pt](14,1)(14,7)
   \psline[linewidth=0.8pt,linecolor=blue](15,1)(15,7)
   \psline[linewidth=0.8pt](16,1)(16,7)
   \psline[linewidth=0.8pt,linecolor=blue](17,1)(17,7)
   \psline[linewidth=0.8pt](18,1)(18,7)
   \psline[linewidth=0.8pt](15,3.5)(17,3.5)
   \psline[linewidth=0.8pt](15,4.5)(17,4.5)
   \psline[linewidth=0.8pt](17,2.5)(18,2.5)
   \psline[linewidth=0.8pt](17,1.5)(18,1.5)
   \psline[linewidth=0.8pt](15.5,3)(15.5,5)
   \psline[linewidth=0.8pt](16.5,3)(16.5,5)
   \psline[linewidth=0.8pt](17.5,1)(17.5,3)
   \psline[linewidth=0.8pt,linecolor=red](15.25,4)(15.25,5)
   \psline[linewidth=0.8pt,linecolor=red](15.75,4)(15.75,5)
   \psline[linewidth=0.8pt,linecolor=red](16.25,3)(16.25,4)
   \psline[linewidth=0.8pt,linecolor=red](16.75,3)(16.75,4)
   \psline[linewidth=0.8pt,linecolor=red](17.25,1)(17.25,2)
   \psline[linewidth=0.8pt,linecolor=red](17.75,1)(17.75,2)
   \psline[linewidth=0.8pt,linecolor=red](17,1.25)(18,1.25)
   \psline[linewidth=0.8pt,linecolor=red](17,1.75)(18,1.75)
   \psline[linewidth=0.8pt,linecolor=red](16,3.25)(17,3.25)
   \psline[linewidth=0.8pt,linecolor=red](16,3.75)(17,3.75)
   \psline[linewidth=0.8pt,linecolor=red](15,4.25)(16,4.25)
   \psline[linewidth=0.8pt,linecolor=red](15,4.75)(16,4.75)
   \rput(15.5,0){(c)}
 \end{pspicture}
  \caption{Hierarchical T-mesh $\mathscr{T}_{m,n}$ in the case of $m=n=3,p=5,q=6$}.\label{fig:process of subdivision}
 \end{center}
\end{figure}
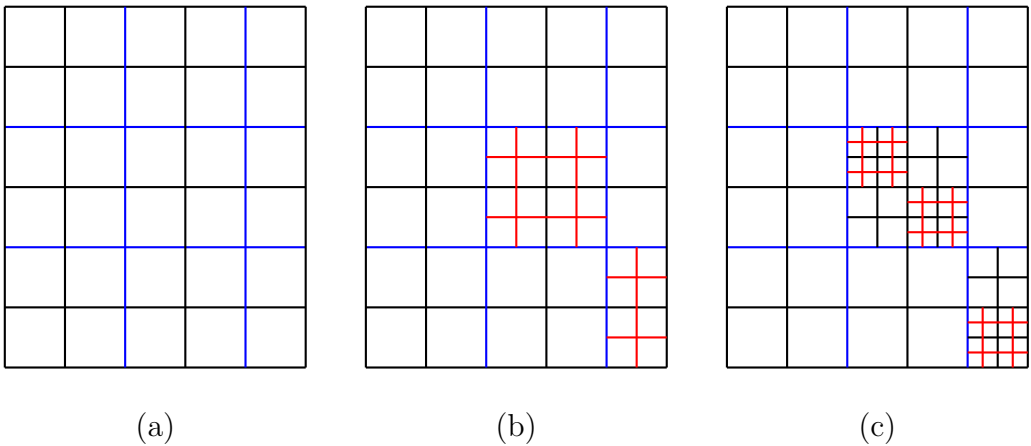

\begin{definition}
A hierarchical T-mesh $\mathscr{T}_{m,n}$ associated with $(m,n)$ is defined level by level. At level $k=0$, $\mathscr{T}_{m,n}$ is a tensor product mesh $\mathscr{T}_\otimes$. From
level $0$ to level $1$, subdivide $\mathscr{T}_\otimes$ into $(m,n)$-subdomains, and
some of which are further subdivided to get the mesh at level $1$. Generally from level $k$ to level $k+1$, subdivide some local tensor product meshes at level $k$ (which are obtained by subdividing subdomains at level $k-1$) into $(m,n)$-subdomains at level $k$, some
of which are further subdivided to get level $k+1$ mesh.
\end{definition}

The T-mesh $\mathscr{T}_{m,n}$ is a special type of hierarchical T-mesh~\cite{Deng2006PHT}.
In particular, it is a regular hierarchical T-mesh when $m=n=2$. Figure~\ref{fig:process of subdivision} illustrates the process of generating a hierarchical T-mesh $\mathscr{T}_{3,3}$.
Figure~\ref{fig:process of subdivision}(a) shows that a tensor product mesh is subdivided into
subdomains at level $k=0$. Figure~\ref{fig:process of subdivision}(b) shows two subdomains at level $k=0$ are subdivided to get a mesh at level $k=1$. Figure~\ref{fig:process of subdivision}(c) shows two local tensor product meshes (corresponding to the two subdivided subdomains in Figure~\ref{fig:process of subdivision}(b)) are subdivided into $(3,3)$-subdomains at level $k=1$, and three subdomains at level $k=1$ are further subdivided to get a mesh at level $k=2$.

\subsection{The dimension formulas}

Now we are able to state our main result in the paper -- the
dimension formula for spline space
$\overline{\mathbf{S}}(m,n,m-1,n-1,\mathscr{T}_{m,n}^\varepsilon)$,
where $\mathscr{T}_{m,n}^\varepsilon$ is $\mathscr{T}_{m,n}$'s
extended T-mesh associated with degree $m,n$.

\begin{theorem}\label{theorem_dimension_formula}
Let $V^+$ be the number of crossing vertices of $\mathscr{T}_{m,n}^\varepsilon$.
Assume $\mathscr{T}_{m,n}^\varepsilon$ has $E_H$ interior horizontal l-edges
and $E_V$ interior vertical l-edges. Then
\begin{align}\label{dimension_formula}
   & \dim\overline{\mathbf{S}}(m,n,m-1,n-1,\mathscr{T}_{m,n}^\varepsilon)\nonumber\\
   & =(m-1)(n-1)+V^+-(m-1)E_H-(n-1)E_V+\delta,
\end{align}
where $\delta$ is the number of isolated subdomains of
$\mathscr{T}_{m,n}^\varepsilon$ at all levels.
\end{theorem}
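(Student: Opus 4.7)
The plan is to reduce the computation to the conformality vector space via Theorem~\ref{isohomophism} and then filter $W[\mathscr{T}_{m,n}^\varepsilon]$ with respect to a level-compatible ordering of the interior l-edges.

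First, I would enumerate the interior l-edges of $\mathscr{T}_{m,n}^\varepsilon$ as $E_1, \ldots, E_N$ so that l-edges created at lower hierarchical levels appear before those at higher levels, and inside each level those arising from non-isolated $(m,n)$-subdomains come before those belonging to isolated ones. Let $\mathscr{T}_k^\varepsilon$ denote the intermediate T-mesh built from the boundary skeleton of $\mathscr{T}_{m,n}^\varepsilon$ together with $E_1, \ldots, E_k$, so that $\mathscr{T}_0^\varepsilon$ carries only the extension-forced boundary structure and $\mathscr{T}_N^\varepsilon = \mathscr{T}_{m,n}^\varepsilon$. A direct computation using the linear systems of Lemma~\ref{dim_ledge} applied to this boundary skeleton will give $\dim W[\mathscr{T}_0^\varepsilon] = (m-1)(n-1)$, which accounts for the leading term of the formula.

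Second, for each $k \ge 1$ I would set up a short exact sequence
\[
0 \longrightarrow W[\mathscr{T}_{k-1}^\varepsilon] \longrightarrow W[\mathscr{T}_k^\varepsilon] \xrightarrow{\pi_k} Q_k \longrightarrow 0,
\]
where the left inclusion extends a conformality vector on $\mathscr{T}_{k-1}^\varepsilon$ by placing zero on the new conformality factors created along $E_k$, and $\pi_k$ reads off those new factors. The image $Q_k$ is then governed by the univariate linear system \eqref{linear equation of eq1} (if $E_k$ is horizontal) or \eqref{linear equation of eq2} (if vertical), augmented by the equations inherited from the l-edges of $\mathscr{T}_{k-1}^\varepsilon$ that cross $E_k$. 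A rank count in the spirit of Lemma~\ref{dim_ledge} gives, for a generic non-isolated insertion, $\dim Q_k = V_{E_k}^+ - (m-1)$ when $E_k$ is horizontal and $\dim Q_k = V_{E_k}^+ - (n-1)$ when vertical, where $V_{E_k}^+$ counts the crossing vertices newly introduced on $E_k$. Because every crossing vertex of $\mathscr{T}_{m,n}^\varepsilon$ is created by exactly one such insertion, summing the telescoping identity $\dim W[\mathscr{T}_k^\varepsilon] - \dim W[\mathscr{T}_{k-1}^\varepsilon] = \dim Q_k$ over $k$ reproduces the $V^+ - (m-1)E_H - (n-1)E_V$ contribution.

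The main obstacle is justifying the $\delta$ term. When $E_k$ belongs to an isolated $(m,n)$-subdomain, some constraint equations from the surrounding l-edges fail to reach $E_k$ because the adjacent subdomains carry no subdivisions, so the rank of the system controlling $Q_k$ drops by exactly one relative to the generic count, producing one additional degree of freedom per isolated subdomain. The delicate step is to show the drop is exactly $+1$ no matter how deeply the isolated subdomain is nested; I would handle this by a local calculation on a single isolated $(m,n)$-subdomain of $(m-1)\times(n-1)$ cells, whose subdivision produces short l-edges whose combined conformality system admits, up to a scalar, a unique tensor product B-spline of the form described in Example~\ref{example:Bspline surface}, corresponding to a spline bump supported strictly inside the subdomain. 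The ordering chosen above ensures these bumps are counted exactly once, giving the $+\delta$ correction. Finally, a basis of $\overline{\mathbf{S}}(m,n,m-1,n-1,\mathscr{T}_{m,n}^\varepsilon)$ is assembled by solving the $Q_k$ system for a basis of representatives at each step, extending by zero, and pulling back through $\mathscr{K}^{-1}$, with the isolated-subdomain contributions realized explicitly by the B-spline products of Example~\ref{example:Bspline surface}.
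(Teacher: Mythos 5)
Your filtration is, at heart, the same strategy as the paper's (the paper deletes the interior l-edges top-down, level by level, rather than inserting them bottom-up), but the two load-bearing quantitative claims are asserted rather than proved, and one of them is the entire difficulty of the theorem. In your build-up direction the quotient $Q_k$ is cut out by the univariate system of $E_k$ \emph{together with} the systems of all previously inserted l-edges that cross $E_k$, and there is no a priori reason this coupled system has the expected rank; your sentence ``a rank count in the spirit of Lemma~\ref{dim_ledge} gives, for a generic non-isolated insertion, $\dim Q_k = V_{E_k}^+-(m-1)$'' is exactly the content of Lemma~\ref{lemma:surjection}, whose proof occupies the whole appendix and needs both a very specific order (the five classes $A_1,\dots,A_5$ within each level, with trivial l-edges taken first) and explicit constructions of splines (linear combinations of tensor-product B-splines, as in the cases $\alpha=0,1,2$) realizing a basis of each quotient. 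Nothing here is ``generic'': the instability results cited in the introduction show such rank counts can fail for general T-meshes and can even depend on the geometry, and the restriction to $\mathscr{T}_{m,n}$ plus the bespoke order is precisely what makes the claim provable. The same gap affects your $\delta$ term: the interior-bump heuristic correctly identifies the extra degree of freedom of an isolated subdomain, but you must show the defect is exactly one — i.e.\ that every other l-edge created by subdividing that subdomain still contributes its full amount at every nesting depth — which is what the paper's construction establishes by exhibiting a single exceptional vertical l-edge with $v_j^+=n-2$ while proving surjectivity for all the others.

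There is also a concrete error in your accounting of the leading term. The boundary skeleton does not have $\dim W[\mathscr{T}_0^\varepsilon]=(m-1)(n-1)$: if $\mathscr{T}_0^\varepsilon$ is just the boundary rectangle, the homogeneous boundary conditions force $W[\mathscr{T}_0^\varepsilon]=0$; if it also carries the copied boundary lines, its dimension is $(m+1)(n+1)$, and those copies are interior l-edges of $\mathscr{T}_{m,n}^\varepsilon$, hence are counted in $E_H$ and $E_V$ and must appear among your inserted $E_k$, otherwise the telescoping sum no longer matches the formula being proved. In the paper $(m-1)(n-1)$ is not the dimension of any skeleton mesh; it is the correction term in the level-0 tensor-product count \eqref{dimension_level_0}, arising because the last l-edges treated in the tensor-product stage have too few crossing vertices, so their contributions are $(\,\cdot\,)_+$ rather than the signed quantity. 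The bottom-up analogue is that your first few insertions contribute strictly more than $V_{E_k}^+-(m-1)$, and you would have to prove that this surplus equals exactly $(m-1)(n-1)$; your totals come out numerically right only because that unproved correction has been silently relocated into the skeleton term.
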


Before proving the theorem, we need some preparations.

\begin{lemma}
\label{lemma:surjection}
  Let $\mathscr{T}_{m,n}$ be a hierarchical T-mesh associated with $(m,n)$, and
  $\mathscr{T}_{m,n}^\varepsilon$ be its extended T-mesh. $\mathbf{E}$ is the set
  of all the interior l-edges of $\mathscr{T}_{m,n}^\varepsilon$. Then the l-edges
  in $\mathbf{E}$ can be ordered as $E_t\prec E_{t-1}\prec \cdots \prec E_1$ such that
  the projection mapping
  $$\pi:\,\, W[\mathscr{T}_i]\longrightarrow W[E_i]$$
  is surjective, $i=1,2,\ldots,t$. Here $\mathscr{T}_i$ is the T-mesh obtained by deleting
  $\{E_1,\cdots,E_{i-1}\}$ from $\mathscr{T}_{m,n}^\varepsilon$, and $E_i\subset
  \mathscr{T}_i$ is regarded as a l-edge in $\mathscr{T}_i$.
  \end{lemma}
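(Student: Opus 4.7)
The plan is to order the interior l-edges of $\mathscr{T}_{m,n}^\varepsilon$ by descending hierarchical level: pick $E_1$ among the l-edges introduced at the highest level of the recursive construction of $\mathscr{T}_{m,n}$ (prioritizing those inside isolated subdomains, as they are the most localized), and after removing $E_1,\ldots,E_{i-1}$ to form $\mathscr{T}_i$, pick $E_i$ as an interior l-edge of $\mathscr{T}_i$ still sitting at the deepest subdivision level present. This ordering respects the tree-like structure of the hierarchical construction and guarantees that at each step $E_i$ has a combinatorially simple neighborhood inside $\mathscr{T}_i$.

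To prove surjectivity of $\pi: W[\mathscr{T}_i] \to W[E_i]$, I would build an explicit section. Without loss of generality take $E_i$ to be horizontal with vertices $v_1,\ldots,v_r$ at coordinates $(x_1,y_0),\ldots,(x_r,y_0)$, and let $\mathbf{k}' = (k_1',\ldots,k_r') \in W[E_i]$ be given. Through each $v_j$ passes a vertical l-edge $F_j$ of $\mathscr{T}_i$. Combining Lemma~\ref{dim_ledge} with the B-spline construction in Example~\ref{example:Bspline curve }, I would produce a conformality vector $\mathbf{k}^{(j)}$ supported on $F_j$ whose entry at $v_j$ equals $k_j'$. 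Extending each $\mathbf{k}^{(j)}$ by zeros to the full vertex set of $\mathscr{T}_i$ and summing over $j$ defines a candidate extension $\mathbf{k}$ with $\pi(\mathbf{k}) = \mathbf{k}'$, since the $F_j$ are pairwise disjoint off $E_i$.

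The main obstacle, and the technical core of the proof, is showing $\mathbf{k} \in W[\mathscr{T}_i]$: the vertical-l-edge constraints hold by construction of each $\mathbf{k}^{(j)}$, and the constraint on $E_i$ itself is just the hypothesis $\mathbf{k}' \in W[E_i]$. What remains is to verify $L^h(\widetilde{E}) = 0$ for every other horizontal l-edge $\widetilde{E}$ of $\mathscr{T}_i$. Here the $(m-1)\times(n-1)$ size of the $(m,n)$-subdomains is crucial: because $E_i$ sits at the deepest level present, every parallel l-edge $\widetilde{E}$ either avoids $\bigcup_j F_j$ altogether (and so carries only zeros at the relevant vertices) or meets $\bigcup_j F_j$ inside a single tensor-product block whose structure forces the sum of the B-spline-like contributions to satisfy $L^h(\widetilde{E}) = 0$ via univariate B-spline identities.

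The most delicate case is the isolated-subdomain case, where the available parallel l-edges are short; handling it requires that each $\mathbf{k}^{(j)}$ be chosen with minimal (B-spline-tight) support, which is ensured precisely by the $(m,n)$-subdomain hypothesis together with the homogeneous boundary conditions of the extended mesh $\mathscr{T}_{m,n}^\varepsilon$. The whole argument is essentially a case analysis on the relative position of $E_i$ inside the hierarchy of subdomains, and the ordering is chosen exactly so that at each step only the easiest configurations remain.
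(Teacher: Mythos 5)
Your ordering (deepest level first, isolated subdomains prioritized) is in the spirit of the paper's, but the core of your surjectivity argument has a genuine gap. You extend a given $\mathbf{k}'\in W[E_i]$ by vectors $\mathbf{k}^{(j)}$, each supported on the single vertical l-edge $F_j$ through $v_j$, so the total extension $\mathbf{k}$ is supported on the comb $E_i\cup\bigcup_j F_j$. Such a vector in general cannot lie in $W[\mathscr{T}_i]$: every horizontal l-edge $\widetilde{E}\neq E_i$ that meets the comb inherits, as its row of conformality factors, the values of the various $\mathbf{k}^{(j)}$ at the crossing points, and this row must itself satisfy the $m+1$ moment equations $\sum k_i=\sum k_ix_i=\cdots=\sum k_ix_i^m=0$. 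If $\widetilde{E}$ crosses the comb in only one vertex with a nonzero value, these equations force that value to be zero; more generally they can hold only if each nonzero row has at least $m+2$ nonzero entries and the whole two-dimensional array of values carries a tensor-product structure. Nothing in your construction supplies this; the appeal to ``univariate B-spline identities'' is exactly the step that is missing, and it is the hard part of the lemma. Note also that some $F_j$ may have fewer than $n+2$ vertices in $\mathscr{T}_i$, in which case $W[F_j]=0$ by Lemma~\ref{dim_ledge} and you cannot even prescribe a nonzero factor at $v_j$ using $F_j$ alone.

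The paper's proof avoids comb-shaped supports altogether: for each B-spline basis vector of $W[E_i]$ it constructs a bivariate function $B_j^i$ which is a tensor-product B-spline over a local tensor-product submesh formed by the $(m,n)$-subdomains of the previous level containing the relevant knot sequence, so that its conformality vector is a tensor product $\mathbf{k}^1\otimes\mathbf{k}^2$ supported on an $(m+2)\times(n+2)$ grid of vertices and satisfies every l-edge constraint automatically. The delicate configurations you flag (isolated subdomains, and knot sequences containing one or two vertices where two level-$l$ l-edges cross) are handled not by shrinking supports but by a correction term: a combination $N_1-(k_1/k_2)N_2$ of two tensor-product B-splines chosen so that the unwanted conformality factor at an auxiliary vertex cancels. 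Moreover, the within-level order (the classification of l-edges into the sets $A_1,\ldots,A_5$ by their position index in a subdomain, with trivial l-edges removed first) is what guarantees that these constructions are available at every step of the induction; ``deepest level first'' alone does not secure this, so both the extension mechanism and the finer ordering need to be supplied before your argument closes.
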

\begin{proof}
See the appendix.
\end{proof}

Base on the above lemma, we can decompose the conformality vector space
$W[\mathscr{T}^\varepsilon_{m,n}]$ into direct sums of $W[E_i]$.

\begin{theorem}
  \label{main therom}
Let the notations be the same as in Lemma~\ref{lemma:surjection}. Then
$$W[\mathscr{T}_{i}]\cong W[\mathscr{T}_{i+1}]\oplus W[E_i], \quad i=1,2,\ldots,t.$$
\end{theorem}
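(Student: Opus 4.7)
The plan is to exhibit a short exact sequence of finite-dimensional vector spaces
\begin{equation*}
0 \longrightarrow W[\mathscr{T}_{i+1}] \stackrel{\iota}{\longrightarrow} W[\mathscr{T}_i] \stackrel{\pi}{\longrightarrow} W[E_i] \longrightarrow 0,
\end{equation*}
from which the claimed isomorphism is immediate because every short exact sequence of vector spaces splits. Here $\pi$ is the projection sending a conformality vector on $\mathscr{T}_i$ to its restriction at the vertices lying on $E_i$, and its surjectivity is precisely the content of Lemma~\ref{lemma:surjection}. The map $\iota$ is the zero-extension that takes a conformality vector of $\mathscr{T}_{i+1}$ and assigns the value $0$ to each vertex of $\mathscr{T}_i$ that disappears when $E_i$ is deleted; such vertices are exactly the vertices of $E_i$ that are not vertices of $\mathscr{T}_{i+1}$.

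The first verification is that $\iota$ is well-defined, i.e., that for every $\mathbf{k} \in W[\mathscr{T}_{i+1}]$ the vector $\iota(\mathbf{k})$ satisfies every l-edge constraint of $\mathscr{T}_i$. The linear system associated with $E_i$ itself is satisfied trivially because each newly introduced component of $\iota(\mathbf{k})$ is zero. For any other l-edge $E'$ of $\mathscr{T}_i$, the vertices on $E'$ that are absent from $\mathscr{T}_{i+1}$ are precisely the crossings of $E'$ with $E_i$; since $\iota(\mathbf{k})$ vanishes at these crossings, the linear system associated with $E'$ in $\mathscr{T}_i$ reduces term by term to the linear system for $E'$ in $\mathscr{T}_{i+1}$, which holds by hypothesis.

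The remaining step is exactness at the middle, $\mathrm{im}(\iota) = \ker(\pi)$. The inclusion $\mathrm{im}(\iota) \subseteq \ker(\pi)$ is immediate from the construction of $\iota$. Conversely, if $\mathbf{k}' \in \ker(\pi)$, then every component of $\mathbf{k}'$ indexed by a vertex of $E_i$ equals zero, and by Remark~3 following Definition~\ref{conformality-vector-space} such naught values correspond exactly to those vertices being absent from the effective mesh; hence the restriction of $\mathbf{k}'$ to the vertices of $\mathscr{T}_{i+1}$ satisfies every l-edge constraint of $\mathscr{T}_{i+1}$ and furnishes an element of $W[\mathscr{T}_{i+1}]$ whose $\iota$-image equals $\mathbf{k}'$. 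Combined with the surjectivity of $\pi$ granted by Lemma~\ref{lemma:surjection}, this completes the short exact sequence, and the desired decomposition follows.

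The principal obstacle is the accurate bookkeeping needed when verifying that zeroing out the components along $E_i$ collapses the linear system of every other l-edge $E'$ precisely to that of $\mathscr{T}_{i+1}$; one must carefully distinguish crossing vertices of $E'$ with $E_i$ (which disappear in $\mathscr{T}_{i+1}$) from T-junctions or vertices that persist for independent reasons, a distinction that relies on the interpretation of Remark~3 about naught conformality factors.
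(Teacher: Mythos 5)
Your proposal is correct and follows essentially the same route as the paper: the short exact sequence $0 \to W[\mathscr{T}_{i+1}] \to W[\mathscr{T}_i] \to W[E_i] \to 0$, with the zero-extension as the embedding, surjectivity of $\pi$ supplied by Lemma~\ref{lemma:surjection}, exactness in the middle checked on components at the vertices of $E_i$, and the splitting of exact sequences of vector spaces. Your explicit verification that the zero-extension satisfies the l-edge constraints of $\mathscr{T}_i$ is a point the paper handles only implicitly (via the inclusion of the corresponding spline spaces), but it is the same argument in substance.
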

\begin{proof} Since $\mathscr{T}_{i}= \mathscr{T}_{i+1}\cup E_i$,
$\overline{\mathbf{S}}(m,n,m-1,n-1,\mathscr{T}_{i+1}) \subset
\overline{\mathbf{S}}(m,n,m-1,n-1,\mathscr{T}_{i})$. Correspondingly,
$W[\mathscr{T}_{i+1}]$ can be regarded as a subspace of $W[\mathscr{T}_i]$ by
taking every vector $\mathbf k\in W[\mathscr{T}_{i+1}]$ as a vector in
$W[\mathscr{T}_i]$ whose components corresponding to the vertices of $E_i$ are zero,
and the remaining components are the same as ${\mathbf k}'s$.

Now consider the sequence
\begin{equation}\label{exact sequence}
\xymatrix@C=0.5cm{
  0 \ar[r] & W[\mathscr{T}_{i+1}] \ar[rr]^{i} && W[\mathscr{T}_i] \ar[rr]^{\pi} && W[E_i] \ar[r] & 0
  },
\end{equation}
where $i$ in the natural embedded mapping, and $\pi$ is the projection mapping defined
in Lemma~\ref{lemma:surjection}. We will show that the sequence is exact. Since $i$ is injective and $\pi$ is surjective by Lemma~\ref{lemma:surjection}, it is enough to show
that $Im(i)=Ker(\pi)$.


On one hand, for any ${\mathbf k}\in W[\mathscr{T}_{i+1}]$, $i({\mathbf k})={\bf k'}$ is a
vector in $W[\mathscr{T}_{i}]$ whose components (conformality factors) corresponding to
the vertices of $E_i$ are zero. Thus $\pi(i({\mathbf k}))={\bf 0} \in W[E_i]$, that is,
$Im(i)\subset Ker(\pi)$.

On the other hand, for any $\mathbf{k}\in Ker(\pi)$, $\pi(\mathbf{k})={\bf 0}$, that is,
the components of $\mathbf{k}$ corresponding to the vertices of $E_i$ are zero. Therefore
$\mathbf{k}\in Im(i)$, i.e., $Ker(\pi)\subset Im(i)$.


Thus the sequence \eqref{exact sequence} is exact. Since $W[\mathscr{T}_{i+1}]$, $W[\mathscr{T}_{i}]$ and $W[E_i]$ are linear spaces, the sequence is split, and therefore
$W[\mathscr{T}_{i}]\cong W[\mathscr{T}_{i+1}]\oplus W[E_i]$, as required.
\end{proof}


Based on the above theorem, we have
\begin{corollary}
The set of functions
$\cup_{i=1}^{t}\{B_j^i(x,y)\}_{j=1}^{j=r_i-m-1}$ defined in the
proof of Lemma~ \ref{lemma:surjection} form a basis for the spline
space
$\overline{\mathbf{S}}(m,n,m-1,n-1,\mathscr{T}_{m,n}^\varepsilon)$.
\end{corollary}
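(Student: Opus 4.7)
The plan is to iterate Theorem~\ref{main therom} to obtain a full direct-sum decomposition of the conformality vector space $W[\mathscr{T}_{m,n}^\varepsilon]$, transport it to the spline space through the isomorphism of Theorem~\ref{isohomophism}, and then read off a basis from the functions $B_j^i$ constructed when proving surjectivity in Lemma~\ref{lemma:surjection}.

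First, using the ordering $E_t \prec E_{t-1} \prec \cdots \prec E_1$ supplied by Lemma~\ref{lemma:surjection} and applying Theorem~\ref{main therom} recursively, I would unfold
$$W[\mathscr{T}_1] \cong W[\mathscr{T}_2]\oplus W[E_1] \cong W[\mathscr{T}_3]\oplus W[E_2]\oplus W[E_1] \cong \cdots \cong W[\mathscr{T}_{t+1}] \oplus \bigoplus_{i=1}^{t} W[E_i].$$
Since $\mathscr{T}_{t+1}$ has no interior l-edges left, the homogeneous boundary conditions force $W[\mathscr{T}_{t+1}]=\{\mathbf{0}\}$. Combining this with Theorem~\ref{isohomophism} and equations \eqref{isomorphic_tspline}--\eqref{isomorphic_ledge} gives
$$\overline{\mathbf{S}}(m,n,m-1,n-1,\mathscr{T}_{m,n}^\varepsilon) \;\cong\; \bigoplus_{i=1}^{t} \overline{\mathbf{S}}(m,m-1,E_i) \;\text{ or }\; \overline{\mathbf{S}}(n,n-1,E_i),$$
according as $E_i$ is horizontal or vertical. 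By Lemma~\ref{dim_ledge}, each summand has dimension $r_i-m-1$ (resp.\ $s_i-n-1$), which is precisely the number of functions $\{B_j^i\}_{j=1}^{r_i-m-1}$.

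Next, I would verify that the $B_j^i$ collected over all $i$ form a basis. By construction in the proof of Lemma~\ref{lemma:surjection}, the conformality vectors $\mathscr{K}(B_j^i)\in W[\mathscr{T}_i]$ project under $\pi$ to a basis of $W[E_i]$; hence they provide a section realizing the splitting in \eqref{exact sequence}. Spanning follows from a dimension count using the decomposition above, together with Theorem~\ref{isohomophism}. Linear independence follows from the iterated splitting: given any relation $\sum_{i,j} c_j^i B_j^i \equiv 0$, apply $\mathscr{K}$ and then the projection $\pi$ associated with $E_1$ to conclude $c_j^1=0$ for all $j$ (since the images $\pi(\mathscr{K}(B_j^1))$ are independent in $W[E_1]$); the remaining relation lies in $W[\mathscr{T}_2]$, so iterating on $E_2,\ldots,E_t$ kills all coefficients.

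The main subtlety is that a preimage under $\pi$ is only well defined up to an element of $W[\mathscr{T}_{i+1}]$, so one must check that the specific lifts $B_j^i$ produced in the appendix really give a compatible global family rather than merely a basis summand by summand. This is exactly what the split short exact sequence \eqref{exact sequence} of Theorem~\ref{main therom} ensures: the recursive construction of the $B_j^i$ (one level of the filtration at a time) automatically assembles the pieces into an honest basis of the whole spline space, so no additional Gram--Schmidt-type correction is needed.
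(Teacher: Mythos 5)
Your argument is correct and follows exactly the route the paper intends: iterate the splitting $W[\mathscr{T}_i]\cong W[\mathscr{T}_{i+1}]\oplus W[E_i]$ of Theorem~\ref{main therom} down to the trivial space, transport through the isomorphism $\mathscr{K}$ of Theorem~\ref{isohomophism}, and use that the lifts $B_j^i$ project to bases of the $W[E_i]$ to get independence plus a dimension count for spanning. The paper states the corollary without writing out these details, so your proposal simply makes explicit the same decomposition-based argument.
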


Now we are ready to prove the dimension formula \eqref{dimension_formula}.

\begin{proof}[Proof of Theorem~\ref{theorem_dimension_formula}]

   By Theorem~\ref{main therom}, $W[\mathscr{T}_{m,n}^\varepsilon]$ can be decomposed
   into direct sums of $W[E_i]$:
   $$W[\mathscr{T}_{m,n}^\varepsilon]\cong\oplus_{i=1}^t W[E_i].$$

    The interior l-edges $\mathbf{E}=\{E_1,E_2,\ldots,E_t\}$ of $\mathscr{T}_{m,n}^\varepsilon$ can be divided into disjoint sets ${\mathbf E}^i$, $i=0,1,\ldots,l$, i.e., ${\mathbf E}=\cup_{i=0}^l {\mathbf E}^i$ and ${\mathbf E}^i\cap {\mathbf E}^j=\emptyset,\,i\not=j$. ${\mathbf E}^i$ is the set of interior l-edges of $\mathscr{T}_{m,n}^\varepsilon$ defined at level $i$, $i=0,1,\ldots,l$.
   In particular, ${\mathbf E}^0$ is the set of the interior l-edges of the initial extended tensor product mesh  $\mathscr{T}_\otimes^\varepsilon$.
   By the order defined in Lemma~\ref{lemma:surjection}, any element in ${\mathbf E}^i$
   precedes any element in ${\mathbf E}^j$ for $i>j$.

   By the direct sum decomposition of $W[\mathscr{T}_{m,n}^\varepsilon]$,
   \begin{equation}\label{dimension_sum}
   \dim W[\mathscr{T}_{m,n}^\varepsilon]=\sum_{i=0}^l\sum_{E_j\in {\mathbf E}^i}\dim W[E_j].
   \end{equation}

   Now we count $\dim W[E_j]$ according to the level of $E_j$ from the highest level $l$
   to the lowest level $0$.

   By Lemma~\ref{dim_ledge} and \eqref{isomorphic_ledge},
   $$ \dim W[E_j]=\dim\overline{\mathbf{S}}(m,m-1,E_j)=(v_j^+-m+1)+$$
   if $E_j$ is a horizontal l-edge, and
   $$ \dim W[E_j]=\dim\overline{\mathbf{S}}(n,n-1,E_j)=(v_j^+-n+1)+$$
   if $E_j$ is a vertical l-edge. Here $v_j^+$ is the number of crossing vertices of
   $E_j$ in the mesh $\mathscr{T}_j$.

   Consider all the interior l-edges ${\mathbf E}^i$ in a fixed level $i$ ($i=l,l-1,\ldots,0$). ${\mathbf E}^i$ is obtained by subdividing some subdomains at level $i-1$ (except for $i=0$ where ${\mathbf E}^0$ is the initial extended tensor product mesh). As shown in the proof of Lemma~\ref{lemma:surjection}, if there is no isolated subdomain at level $i-1$,
   then one can order the l-edges in ${\mathbf E}^i$ such that
   $$
   v^+_j\ge \left\{\begin{array}{ll} m-1, & \mbox{\quad if $E_j$ is a horizontal l-edge}\\
                                n-1, & \mbox{\quad if $E_j$ is a vertical l-edge}
   \end{array}
      \right.
   $$
     If a subdomain $\mathscr{S}$ is isolated, then one can order the l-edges in the subdomain $\mathscr{S}$ such that the above equality holds for all but one vertical l-edge in $\mathscr{S}$.
   For the exceptional l-edge,
   $$
   v^+_j=n-2.
   $$

  Figure \ref{fig:edge-seq}, illustrates an example for the order of
  l-edges in $\mathbf{E}^2$ in the case of $m=4,n=3$.

   From the above facts, one has
   \begin{equation}\label{dimension_level_i}
   \sum_{E_j\in {\mathbf E}^i}\dim W[E_j]=\sum_{E_j\in {\mathbf E}^i_h} (v_j^+-m+1)+\sum_{E_j\in {\mathbf E}^i_v} (v_j^+-n+1)+\delta_i, \quad i\ge 1,
   \end{equation}
   where ${\mathbf E}^i_h$ is the set of horizontal l-edges in ${\mathbf E}^i$, ${\mathbf E}^i_v$ is the set of vertical l-edges in ${\mathbf E}^i$, and $\delta_i$ is the number of isolated subdomains at level $i-1$. For $i=0$,
   since ${\mathbf E}^0$ is the set of l-edges of a tensor product mesh $\mathscr{T}_\otimes^\varepsilon$, it is direct to check that
       \begin{equation}\label{dimension_level_0}
   \sum_{E_j\in {\mathbf E}^0}\dim W[E_j]=\sum_{E_j\in {\mathbf E}^0_h} (v_j^+-m+1)+\sum_{E_j\in {\mathbf E}^0_v} (v_j^+-n+1)+(m-1)(n-1).
   \end{equation}

   Now the dimension formula \eqref{dimension_formula} follows from \eqref{isomorphic_tspline}, \eqref{dimension_sum}, \eqref{dimension_level_i} and \eqref{dimension_level_0}.
\end{proof}

\subsection{Examples}
In this subsection, we give two examples to count the dimensions of the spline spaces
$\overline{\mathbf{S}}(m,n,m-1,n-1,\mathscr{T}_{m,n}^\varepsilon)$.

\begin{example}
  In the case of $m=2,n=2$, $\mathscr{T}_{m,n}$ is a regular
  hierarchical T-mesh and we denote it as $\mathscr{T}$. By the dimension formula \eqref{dimension_formula},
  \begin{align*}
  \dim\mathbf{S}(2,2,1,1,\mathscr{T})&=\dim\overline{\mathbf{S}}(2,2,1,1,\mathscr{T}^\varepsilon)\\
  &=V^+-E+\delta+1
  \end{align*}
  where, $V^+$ is the number of crossing vertices,
  $E$ is the number of interior l-edges, and $\delta$ is the number
  of isolated cells of $\mathscr{T}^\varepsilon$. This formula has
  been presented in \cite{Deng2008:21}.
\end{example}

 Next we give a concrete example for the case $m=3,n=3$.
 \begin{example}
 In this example, the hierarchical T-mesh $\mathscr{T}_{3,3}$ (which is the same as the
 T-mesh in Figure~\ref{fig:process of subdivision}(c)) and its extension
 $\mathscr{T}_{3,3}^\varepsilon$ are illustrated in  Figure~\ref{fig:extensionT33}.
 \begin{figure}[htpb]
 \psset{xunit=0.6cm,yunit=0.6cm}
 \centering
\begin{pspicture}(6,-1)(19,8)%
 \psline[linewidth=0.8pt](6,1)(11,1)
   \psline[linewidth=0.8pt](6,2)(11,2)
   \psline[linewidth=0.8pt,linecolor=blue](6,3)(11,3)
   \psline[linewidth=0.8pt](6,4)(11,4)
   \psline[linewidth=0.8pt,linecolor=blue](6,5)(11,5)
   \psline[linewidth=0.8pt](6,6)(11,6)
   \psline[linewidth=0.8pt](6,7)(11,7)
   \psline[linewidth=0.8pt](6,1)(6,7)
   \psline[linewidth=0.8pt](7,1)(7,7)
   \psline[linewidth=0.8pt,linecolor=blue](8,1)(8,7)
   \psline[linewidth=0.8pt](9,1)(9,7)
   \psline[linewidth=0.8pt,linecolor=blue](10,1)(10,7)
   \psline[linewidth=0.8pt](11,1)(11,7)
   \psline[linewidth=0.8pt](8,3.5)(10,3.5)
   \psline[linewidth=0.8pt](8,4.5)(10,4.5)
   \psline[linewidth=0.8pt](10,2.5)(11,2.5)
   \psline[linewidth=0.8pt](10,1.5)(11,1.5)
   \psline[linewidth=0.8pt](8.5,3)(8.5,5)
   \psline[linewidth=0.8pt](9.5,3)(9.5,5)
   \psline[linewidth=0.8pt](10.5,1)(10.5,3)
   \psline[linewidth=0.8pt,linecolor=red](8.25,4)(8.25,5)
   \psline[linewidth=0.8pt,linecolor=red](8.75,4)(8.75,5)
   \psline[linewidth=0.8pt,linecolor=red](9.25,3)(9.25,4)
   \psline[linewidth=0.8pt,linecolor=red](9.75,3)(9.75,4)
   \psline[linewidth=0.8pt,linecolor=red](10.25,1)(10.25,2)
   \psline[linewidth=0.8pt,linecolor=red](10.75,1)(10.75,2)
   \psline[linewidth=0.8pt,linecolor=red](10,1.25)(11,1.25)
   \psline[linewidth=0.8pt,linecolor=red](10,1.75)(11,1.75)
   \psline[linewidth=0.8pt,linecolor=red](9,3.25)(10,3.25)
   \psline[linewidth=0.8pt,linecolor=red](9,3.75)(10,3.75)
   \psline[linewidth=0.8pt,linecolor=red](8,4.25)(9,4.25)
   \psline[linewidth=0.8pt,linecolor=red](8,4.75)(9,4.75)
   \psline[linewidth=0.8pt](13,1)(18,1)
   \psline[linewidth=0.8pt](13,2)(18,2)
   \psline[linewidth=0.8pt,linecolor=blue](13,3)(18,3)
   \psline[linewidth=0.8pt](13,4)(18,4)
   \psline[linewidth=0.8pt,linecolor=blue](13,5)(18,5)
   \psline[linewidth=0.8pt](13,6)(18,6)
   \psline[linewidth=0.8pt](13,7)(18,7)
   \psline[linewidth=0.8pt](13,1)(13,7)
   \psline[linewidth=0.8pt](14,1)(14,7)
   \psline[linewidth=0.8pt,linecolor=blue](15,1)(15,7)
   \psline[linewidth=0.8pt](16,1)(16,7)
   \psline[linewidth=0.8pt,linecolor=blue](17,1)(17,7)
   \psline[linewidth=0.8pt](18,1)(18,7)
   \psline[linewidth=0.8pt](15,3.5)(17,3.5)
   \psline[linewidth=0.8pt](15,4.5)(17,4.5)
   \psline[linewidth=0.8pt](17,2.5)(18,2.5)
   \psline[linewidth=0.8pt](17,1.5)(18,1.5)
   \psline[linewidth=0.8pt](15.5,3)(15.5,5)
   \psline[linewidth=0.8pt](16.5,3)(16.5,5)
   \psline[linewidth=0.8pt](17.5,1)(17.5,3)
   \psline[linewidth=0.8pt,linecolor=red](15.25,4)(15.25,5)
   \psline[linewidth=0.8pt,linecolor=red](15.75,4)(15.75,5)
   \psline[linewidth=0.8pt,linecolor=red](16.25,3)(16.25,4)
   \psline[linewidth=0.8pt,linecolor=red](16.75,3)(16.75,4)
   \psline[linewidth=0.8pt,linecolor=red](17.25,1)(17.25,2)
   \psline[linewidth=0.8pt,linecolor=red](17.75,1)(17.75,2)
   \psline[linewidth=0.8pt,linecolor=red](17,1.25)(18,1.25)
   \psline[linewidth=0.8pt,linecolor=red](17,1.75)(18,1.75)
   \psline[linewidth=0.8pt,linecolor=red](16,3.25)(17,3.25)
   \psline[linewidth=0.8pt,linecolor=red](16,3.75)(17,3.75)
   \psline[linewidth=0.8pt,linecolor=red](15,4.25)(16,4.25)
   \psline[linewidth=0.8pt,linecolor=red](15,4.75)(16,4.75)
   \psline[linewidth=0.8pt,linecolor=gray](12,0)(12,8)
   \psline[linewidth=0.8pt,linecolor=gray](12.33,0)(12.33,8)
   \psline[linewidth=0.8pt,linecolor=gray](12.66,0)(12.66,8)
   \psline[linewidth=0.8pt,linecolor=gray](18.33,0)(18.33,8)
   \psline[linewidth=0.8pt,linecolor=gray](18.66,0)(18.66,8)
   \psline[linewidth=0.8pt,linecolor=gray](19,0)(19,8)
   \psline[linewidth=0.8pt,linecolor=gray](12,0)(19,0)
   \psline[linewidth=0.8pt,linecolor=gray](12,0.33)(19,0.33)
   \psline[linewidth=0.8pt,linecolor=gray](12,0.66)(19,0.66)
   \psline[linewidth=0.8pt,linecolor=gray](12,7.33)(19,7.33)
   \psline[linewidth=0.8pt,linecolor=gray](12,7.66)(19,7.66)
   \psline[linewidth=0.8pt,linecolor=gray](12,8)(19,8)
   \psline[linewidth=0.8pt,linecolor=gray](13,1)(13,0)
   \psline[linewidth=0.8pt,linecolor=gray](14,1)(14,0)
   \psline[linewidth=0.8pt,linecolor=gray](15,1)(15,0)
   \psline[linewidth=0.8pt,linecolor=gray](16,1)(16,0)
   \psline[linewidth=0.8pt,linecolor=gray](17,1)(17,0)
   \psline[linewidth=0.8pt,linecolor=gray](18,1)(18,0)
   \psline[linewidth=0.8pt,linecolor=gray](13,7)(13,8)
   \psline[linewidth=0.8pt,linecolor=gray](14,7)(14,8)
   \psline[linewidth=0.8pt,linecolor=gray](15,7)(15,8)
   \psline[linewidth=0.8pt,linecolor=gray](16,7)(16,8)
   \psline[linewidth=0.8pt,linecolor=gray](17,7)(17,8)
   \psline[linewidth=0.8pt,linecolor=gray](18,7)(18,8)
   \psline[linewidth=0.8pt,linecolor=gray](12,1)(13,1)
   \psline[linewidth=0.8pt,linecolor=gray](12,2)(13,2)
   \psline[linewidth=0.8pt,linecolor=gray](12,3)(13,3)
   \psline[linewidth=0.8pt,linecolor=gray](12,4)(13,4)
   \psline[linewidth=0.8pt,linecolor=gray](12,5)(13,5)
   \psline[linewidth=0.8pt,linecolor=gray](12,6)(13,6)
   \psline[linewidth=0.8pt,linecolor=gray](12,7)(13,7)
   \psline[linewidth=0.8pt,linecolor=gray](18,1)(19,1)
   \psline[linewidth=0.8pt,linecolor=gray](18,2)(19,2)
   \psline[linewidth=0.8pt,linecolor=gray](18,3)(19,3)
   \psline[linewidth=0.8pt,linecolor=gray](18,4)(19,4)
   \psline[linewidth=0.8pt,linecolor=gray](18,5)(19,5)
   \psline[linewidth=0.8pt,linecolor=gray](18,6)(19,6)
   \psline[linewidth=0.8pt,linecolor=gray](18,7)(19,7)
   \psline[linewidth=0.8pt,linecolor=gray](18,1.25)(19,1.25)
   \psline[linewidth=0.8pt,linecolor=gray](18,1.5)(19,1.5)
   \psline[linewidth=0.8pt,linecolor=gray](18,1.75)(19,1.75)
   \psline[linewidth=0.8pt,linecolor=gray](17.25,0)(17.25,1)
   \psline[linewidth=0.8pt,linecolor=gray](17.5,0)(17.5,1)
   \psline[linewidth=0.8pt,linecolor=gray](17.75,0)(17.75,1)
   \psline[linewidth=0.8pt,linecolor=gray](18,2.5)(19,2.5)
   \rput(8.5,-0.5){$\mathscr{T}_{3,3}$}
   \rput(15.5,-0.5){$\mathscr{T}^\varepsilon_{3,3}$}
   \end{pspicture}
 \caption{A hierarchical T-mesh $\mathscr{T}_{3,3}$ and its extension.}\label{fig:extensionT33}
\end{figure}
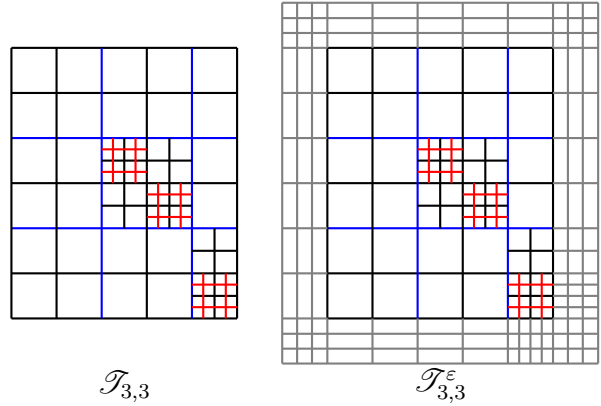
There are three isolated subdomains (one at level $0$ and two at level $1$) in
the middle of $\mathscr{T}_{3,3}^\varepsilon$.  Thus $\delta=3$.
It is easy to count that $V^+=166$, $E_H=21$ and $E_V=19$. So
\begin{align*}
  &\dim\overline{\mathbf{S}}(3,2,2,2,\mathscr{T}_{3,3}^\varepsilon)\\
  &=V^+-(3-1)E_H-(3-1)E_V+\delta+(3-1)(3-1)\\
    &=93.
\end{align*}

We can also get the dimension by counting the number of basis functions.

There are $(12-4)(13-4)=72$ basis functions over the
$11\times12$ tensor product mesh $\mathscr{T}_{3,3}^\varepsilon$. At level $1$,
$7$ basis functions are added to the basis functions set by subdividing two subdomains.
At level $2$, $14$ basis functions are added to the basis functions set by
subdividing three subdomains. Totally, there are $72+7+14=93$ basis functions,
and therefore the dimension of
$\overline{\mathbf{S}}(3,2,2,2,\mathscr{T}_{3,3}^\varepsilon)$ is
$93$.
 \end{example}

\section{Conclusions and future work}

 This paper presents a dimension formula for the spline space $\mathbf{S}(m,n,m-1,n-1,\mathscr{T})$ over certain type of hierarchical T-meshes.
 By using the smoothing co-factor method, we set up a bijection between the spline space
 $\mathbf{S}(m,n,m-1,n-1,\mathscr{T})$ and a conformality vector space $W[\mathscr{T}]$.
 Then we decompose $W[\mathscr{T}]$ into direct sum of simple linear spaces by using homological technique,
  and a dimension formula is thus obtained. At a by-product, we
 also obtain a set of basis functions for the spline space $\mathbf{S}(m,n,m-1,n-1,\mathscr{T})$.

 It is important to construct a set of proper basis functions which have nice properties for the spline space in applications.
  We will study this topic in the future.

 \section{Appendix}
 In this section, the proof of Lemma \ref{lemma:surjection} is
 given.
 \begin{proof}

  We first outline the proof strategy. For
   getting the result, we choose a set of bases
   $\{\mathbf{k}_j^i\}$ for $W[E_i]$ and construct a function $B_j^i(x,y)\in\overline{\mathbf{S}}(m,n,m-1,n-1,\mathscr{T}_i)(\cong
   W[\mathscr{T}_i])$ such that the conformality vector of $B_j^i(x,y)$
   corresponding to the vertices of $E_i$ is
   $\mathbf{k}_j^i$. Then $\pi:\,\, W[\mathscr{T}_i]\longrightarrow
   W[E_i]$ is surjective.

%

    Before we discuss the construction of $B_j^i(x,y)$, some preparations are given as follows.

   \noindent \textbf{Preparations}
    \begin{enumerate}
    \item Choose a set of bases for $W[E_i]$. If $E_i$ is horizontal,
    let $\Delta_x: -\infty<x_1<x_2<\cdots<x_{r_i}<\infty$ be the
    $x$-coordinates of the vertices of $E_i$ on $\mathscr{T}_i$.
    The B-spline basis functions
    $\{N[x_j,x_{j+1},\cdots,x_{j+m+1}]\}_{j=1}^{r_i-m-1}$ is a set
    of bases of $\overline{\mathbf{S}}(m,m-1,\Delta_x)$ and
    $\{\mathbf{k}_j^i\}_{j=1}^{r_i-m-1}$ is a set of bases of
    $W[E_i]$, where $\mathbf{k}_{j}^i$ is the conformality vector of
    $N[x_j,x_{j+1},\cdots,x_{j+m+1}]$. For a vertical interior
    l-edge, we can choose a set of bases of $W[E_i]$  similarly.

\item Define an operator on T-meshes.
  The operator $\mathbf{D}$ on a T-mesh $\mathscr{T}$ associated with an l-edge $E_i$ is defined as follows.
  $\mathbf{D}_{E_i}(\mathscr{T})=(E_i,\mathscr{T}_1)$, where
  $\mathscr{T}_1$ is the T-mesh by removing $E_i$ from $\mathscr{T}$. Then
  $\pi_1\circ\mathbf{D}_{E_i}(\mathscr{T}) = E_i$ , and
  $\pi_2\circ\mathbf{D}_{E_i}(\mathscr{T})=\mathscr{T}_1$.

    \item Define an order of
    $\mathbf{E}$. The interior l-edges $\mathbf{E}$ of
    $\mathscr{T}_{m,n}^\varepsilon$ can be divided into disjoint sets ${\mathbf E}^i$, $i=0,1,\ldots,l$,
    i.e., ${\mathbf E}=\cup_{i=0}^l {\mathbf E}^i$ with ${\mathbf E}^i\cap {\mathbf E}^j=\emptyset,\,i\not=j$.
     ${\mathbf E}^i$ is the set of interior l-edges of $\mathscr{T}_{m,n}^\varepsilon$ defined at level $i$, $i=0,1,\ldots,l$,
     where $l$ is the level of $\mathscr{T}_{m,n}$. The order
     ``$\prec$" between $\mathbf{E}^p$ and $\mathbf{E}^q$ is defined as: $\forall e^p\in \mathbf{E}^p, \forall e^q\in
  \mathbf{E}^q$, if $q<p$, then $e^q\prec e^p$. In the following, the
  order ``$\prec$" inside $\mathbf{E}^p$ is presented. Consider
  $\mathbf{E}^p$, and suppose the T-mesh $\widetilde{\mathscr{T}}^p$ is obtained by
  deleting the l-edges $\cup_{i=p+1}^{l}\mathbf{E}^i$ from
  $\mathscr{T}_{m,n}^\varepsilon$.
  Without loss of generality, it is enough to consider the case of $p=l$.

  Then, the order inside $\mathbf{E}^l$ is defined as follows with respect to the case of $l=0$ and $l>0$,
  respectively.
  \begin{itemize}
    \item $l=0$. The T-mesh $\mathscr{T}$ is a tensor product mesh
    $\mathscr{T}_\otimes$. The progress of choosing interior l-edges
    sequence is presented in the following.

    \begin{algorithmic}
  \STATE $\mathscr{T}\leftarrow\mathscr{T}_\otimes$
  \STATE $S_E\leftarrow \text{the set of interior l-edges of~} \mathscr{T}$
  \STATE $i\leftarrow 1$
  \WHILE{$S_E\neq\emptyset$}
  \IF{there is no trivial interior l-edge}
  \STATE choose any interior l-edge $E_i^0$
  \ELSE
  \STATE choose a trivial interior l-edge $E_i^0$ (i.e $W[E_i^0]=0$)
  \ENDIF
  \STATE $\mathscr{T}\leftarrow\pi_2\circ\mathbf{D}_{E_i^0}(\mathscr{T})$
  \STATE $S_E\leftarrow \text{the set of interior l-edges of~} \mathscr{T}$
  \STATE $i\leftarrow i+1$
  \ENDWHILE
\end{algorithmic}

    By this progress,
    we get a sequence of $\mathbf{E}^0$ denoted as
    $E^0_1,E^0_2,\cdots$, where $E^0_i$ is
    chosen earlier than $E^0_{i+1}$. The order is defined
    as $E^0_{i+1}\prec E^0_i$

    \item $l>0$. Select an l-edge $E\in\mathbf{E}^l$. Then $E$ must intersect with an
    $(m,n)$-subdomain $D$ at level $(l-1)$. If $E$ is vertical, we
    consider all the vertical l-edges in $\mathbf{E}^l$ which
    intersect with $D$ and arrange them from left to right. Then
    $E$ is the $i$-th l-edge. By the structure of
    $\mathscr{T}_{m,n}^\varepsilon$, $E$ is also the $i$-th l-edge in
    another $(m,n)$-subdomain at level $(l-1)$ which intersects with
    $E$. Hence we can denote the position of $E$ as $L(E)=i$.
    Considering the local tensor product structure on $D$ at level $(l-1)$,
    $L(E)=i<m$. If $E$ is horizontal, $L(E)$ can be defined in a
    similar way by arranging horizontal l-edges in $\mathbf{E}^l$ from
    top to bottom and $L(E)<n$.

    Define l-edge sets $A_1$, $A_2$, $A_3$, $A_4$, $A_5$ as
    \begin{align*}
    A_1=& \{E\in\mathbf{E}^l: \text{ If $E$ is vertical, } L(E)<m-2;\\&\text{ If $E$ is horizontal, }
    L(E)<n-2\},\\
    A_2=&\{E\in\mathbf{E}^l: \text{$E$ is horizontal, }
    L(E)=n-2\}.\\
    A_3=&\{E\in\mathbf {E}^l: \text{$E$ is vertical, } L(E)=m-2\},\\
    A_4=&\{E\in\mathbf{E}^l: \text{$E$ is horizontal, } L(E)=n-1\},\\
    A_5=&\{E\in\mathbf{E}^l: \text{$E$ is vertical, } L(E)=m-1\}.
    \end{align*}
    The progress of choosing interior l-edge sequence is presented in the following.
    \begin{algorithmic}
      \STATE $\mathscr{T}^0\leftarrow\mathscr{T}_{m,n}^\varepsilon$
      \FOR{$i=1$ to $5$}
      \STATE $\widetilde{\mathscr{T}}\leftarrow\mathscr{T}^{i-1}$
      \STATE $S_E\leftarrow \text{the set of interior l-edges of }\widetilde{\mathscr{T}}~\text{in}~ A_i $
      \STATE $j\leftarrow1$
      \WHILE{$S_E\neq\emptyset$}
      \IF{there is no trivial l-edge in $S_E$}
      \STATE choose any interior l-edge $E_{i,j}^l$ from $S_E$
      \ELSE
      \STATE choose any trivial l-edge $E_{i,j}^l$ from $S_E$
      \ENDIF
      \STATE $\widetilde{\mathscr{T}}\leftarrow\pi_2\circ\mathbf{D}_{E_{i,j}^l}(\widetilde{\mathscr{T}})$
      \STATE $S_E\leftarrow \text{the set of interior l-edges of }\widetilde{\mathscr{T}}~\text{in}~ A_i $
      \STATE $j\leftarrow j+1$
      \ENDWHILE
      \STATE $\mathscr{T}^i\leftarrow\widetilde{\mathscr{T}}$
      \ENDFOR
    \end{algorithmic}
     This progress generates an interior l-edge sequence $E_{1,1}^l$, $E_{1,2}^l$,$\cdots$, $E_{1,j_1}^l$, $E_{2,1}^l$,
    $E_{2,2}^l$, $\cdots$, $E_{2,j_2}^l$, $\cdots$, $E_{5,j_5}^l$.
    If one l-edge $e_1$ is chosen earlier than another l-edge $e_2$ in the sequence, we define $e_2 \prec e_1$.

    For an example, see Figure \ref{fig:edge-seq}, we consider the case of $l=2$. Here
      $m=4,n=3,p=6,q=6$. An l-edge in the T-mesh  belongs to the set in the right side of
      Figure \ref{fig:edge-seq} whose name has the same color as the
      l-edge. And its order of an interior l-edge sequence of $\mathbf{E}^2$ is labeled near itself.
       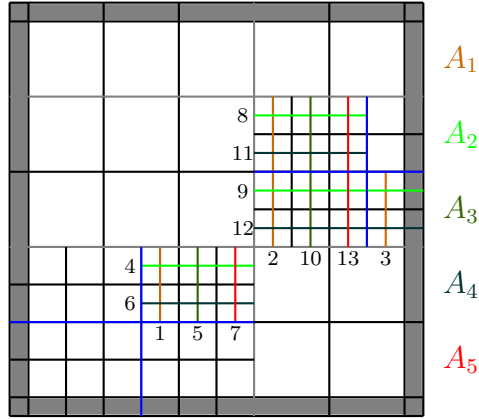
\begin{figure}[htpb]
 \begin{center}
\begin{pspicture}(0.5,0.5)(7.25,6.5)\psset{unit=1cm}
 \psline[linewidth=6pt,linecolor=gray](0.875,0.75)(0.875,6.25)
 \psline[linewidth=6pt,linecolor=gray](6.125,0.75)(6.125,6.25)
 \psline[linewidth=6pt,linecolor=gray](0.75,0.875)(6.25,0.875)
 \psline[linewidth=6pt,linecolor=gray](0.75,6.125)(6.25,6.125)
 \psline[linewidth=0.8pt](0.75,1)(6.25,1)
 \psline[linewidth=0.8pt](0.75,6)(6.25,6)
 \psline[linewidth=0.8pt](1,0.75)(1,6.25)
 \psline[linewidth=0.8pt](6,0.75)(6,6.25)
 \psline[linewidth=0.8pt](0.75,0.75)(6.25,0.75)
 \psline[linewidth=0.8pt](0.75,6.25)(6.25,6.25)
 \psline[linewidth=0.8pt](0.75,0.75)(0.75,6.25)
 \psline[linewidth=0.8pt](6.25,0.75)(6.25,6.25)
 \psline[linewidth=0.8pt](2,0.75)(2,6.25)
 \psline[linewidth=0.8pt](3,0.75)(3,6.25)
 \psline[linewidth=0.8pt,linecolor=gray](4,0.75)(4,6.25)
 \psline[linewidth=0.8pt](5,0.75)(5,6.25)
  \psline[linewidth=0.8pt](0.75,2)(6.25,2)
 \psline[linewidth=0.8pt,linecolor=gray](0.75,3)(6.25,3)
 \psline[linewidth=0.8pt](0.75,4)(6.25,4)
 \psline[linewidth=0.8pt,linecolor=gray](0.75,5)(6.25,5)
\psline[linewidth=0.8pt](1.5,0.75)(1.5,3)
\psline[linewidth=0.8pt,linecolor=gray](2.5,0.75)(2.5,3)
\psline[linewidth=0.8pt](3.5,0.75)(3.5,3)
\psline[linewidth=0.8pt](0.75,1.5)(4,1.5)
\psline[linewidth=0.8pt](0.75,2.5)(4,2.5)
\psline[linewidth=0.8pt](4.5,3)(4.5,5)
\psline[linewidth=0.8pt,linecolor=gray](5.5,3)(5.5,5)
\psline[linewidth=0.8pt](4,3.5)(6.25,3.5)
\psline[linewidth=0.8pt,linecolor=gray](4,4)(6.25,4)
\psline[linewidth=0.8pt](4,4.5)(6.25,4.5)
\definecolor{A1}{rgb}{0.8,0.4,0}
\psline[linewidth=0.8pt,linecolor=A1](2.75,2)(2.75,3)
\psline[linewidth=0.8pt,linecolor=A1](4.25,3)(4.25,5)
\psline[linewidth=0.8pt,linecolor=A1](5.75,3)(5.75,4)
\definecolor{A2}{rgb}{0,1,0}
\psline[linewidth=0.8pt,linecolor=A2](2.5,2.75)(4,2.75)
\psline[linewidth=0.8pt,linecolor=A2](4,4.75)(5.5,4.75)
\psline[linewidth=0.8pt,linecolor=A2](4,3.75)(6.25,3.75)
\definecolor{A3}{rgb}{0.2,0.4,0}
\psline[linewidth=0.8pt,linecolor=A3](4.75,3)(4.75,5)
\psline[linewidth=0.8pt,linecolor=A3](3.25,2)(3.25,3)
\definecolor{A4}{rgb}{0,0.2,0.2}
\psline[linewidth=0.8pt,linecolor=A4](4,3.25)(6.25,3.25)
\psline[linewidth=0.8pt,linecolor=A4](4,4.25)(5.5,4.25)
\psline[linewidth=0.8pt,linecolor=A4](2.5,2.25)(4,2.25)
\definecolor{A5}{rgb}{1,0,0}
\psline[linewidth=0.8pt,linecolor=A5](5.25,3)(5.25,5)
\psline[linewidth=0.8pt,linecolor=A5](3.75,2)(3.75,3)
\psline[linewidth=0.8pt,linecolor=blue](0.75,2)(4,2)
\psline[linewidth=0.8pt,linecolor=blue](4,4)(6.25,4)
\psline[linewidth=0.8pt,linecolor=blue](5.5,3)(5.5,5)
\psline[linewidth=0.8pt,linecolor=blue](2.5,0.75)(2.5,3)
\rput(6.75,5.5){\color{A1}$A_1$} \rput(6.75,4.5){\color{A2}$A_2$}
\rput(6.75,3.5){\color{A3}$A_3$} \rput(6.75,2.5){\color{A4}$A_4$}
\rput(6.75,1.5){\color{A5}$A_5$} \rput(2.75,1.85){\scriptsize{$1$}}
\rput(3.25,1.85){\scriptsize{$5$}}\rput(3.75,1.85){\scriptsize{$7$}}
\rput(2.35,2.25){\scriptsize{$6$}}\rput(2.35,2.75){\scriptsize{$4$}}
\rput(4.25,2.85){\scriptsize{$2$}}\rput(4.75,2.85){\scriptsize{$10$}}
\rput(5.25,2.85){\scriptsize{$13$}}\rput(5.75,2.85){\scriptsize{$3$}}
\rput(3.85,3.25){\scriptsize{$12$}}\rput(3.85,3.75){\scriptsize{$9$}}
\rput(3.85,4.25){\scriptsize{$11$}}\rput(3.85,4.75){\scriptsize{$8$}}
\end{pspicture}
 \caption{An example of $\mathbf{E}^2$, where
 $m=4$, $n=3$, $p=6$, $q=6$.}\label{fig:edge-seq}
 \end{center}
\end{figure}
  \end{itemize}
   \end{enumerate}

\noindent \textbf{$B_j^i(x,y)$ construction}

    Now we are ready to construct $B_j^i(x,y)$.

When $l>0$, the B-spline function $N_j^i(x,y)$ can be constructed by
the tensor product mesh over $(m,n)$-subdomains at level $(l-1)$
contained $e$. Here $N_j^i(x,y)$ is related to the knots of an
interior l-edge $e$ in the sequence associated with $A_1$(or
$A_2,A_3$). It is a tensor-product B-spline function, which equals to
$B_j^i(x,y)$ up to a nonzero constant. For trivial interior l-edges,
no function is introduced. Suppose that we construct the functions
associated with all the l-edges whose orders are ``larger" than the
order of $E$ in the sequence associated with $A_4$. We now construct
the functions associated with $E$. $E$ is a non-trivial horizontal
l-edges and we denote the $x$-coordinates of $E$ as
$\{x_1,x_2,\cdots,x_d\}$. Let $\alpha$ be the number of vertices
formed by two l-edges with level $l$ intersection. This type of
vertex is called  an $(l,l)$-vertex later. The $i$-th knots sequence
of $E$'s $x$-coordinates is $\{x_i,\cdots,x_{i+(m+1)}\}$. According
to the structure of the current mesh, $\alpha\in\{0,1,2\}$.
\begin{enumerate}
\item $\alpha=0$. If $l\geq2$, a B-spline can be constructed
over a tensor product mesh over all the $(m,n)$-subdomains at level
$(l-2)$ containing $E$'s $i$-th knot sequence. If $l=1$, we can
construct a B-spline according to the tensor product mesh of level $
0$. The required function $B_j^i(x,y)$ is the same as the B-spline
function up to a nonzero constant.

\item $\alpha=1$. If $l\geq 2$, consider the unique $(l, l)-$vertex, denoted as $Q$.
Assume the $(m,n)$-subdomain which contains $Q$ at level $(l-2)$ is
$\Sigma$. If the interior vertical l-edge that contains $Q$ goes
through $\Sigma$, we can construct a B-spline similar to the case of
$\alpha=0$. If it does not go through $\Sigma$, one of its endpoint
must be within $\Sigma$ and another is outside of $\Sigma$ according
to the structure of the special hierarchical T-mesh. Suppose the
intersection point between the vertical l-edge and $\Sigma$ is $R$.
We extend the l-edge from the endpoint within $\Sigma$ and cut the
boundary of $\Sigma$ at $P$. Then the $n+2$ vertices can be used to
construct a B-spline $N_1(x,y)$ by a tensor product mesh over all
$(m,n)$-subdomains containing the given $(m+2)$ vertices. These
$n+2$ vertices consist of $P$, $R$, and $n$ vertices in the original
vertical l-edge within $\Sigma$. Therefore, the conformality vector
of $N_1(x,y)$ is nontrivial at these $(m+2)$ vertices and  trivial
at the new vertices without $P$ in the process of extending. Suppose
$k_1$ is the conformality factor at $P$. If another function
$g(x,y)$ is constructed such that its conformality factor at $P$ is
$-k_1$ and the conformality factors at the new vertices without $P$
and the given $(m+2)$ vertices are zeros, then
$f(x,y)=N_1(x,y)+g(x,y)$ differs by a multiplier $\lambda$ ($\neq 0$)
with the function $B_i^j(x,y)$ which we need by the
linear property of $\mathscr{K}$ and Example \ref{example:Bspline
surface}.

Now we discuss the construction of $g(x,y)$. If we remove $Q$
from the set consisting of $P$ and all the vertices in the original
vertical l-edge, the number of remaining vertices is not less than
$(n+2)$. We choose $(n+2)-$vertices from these remaining vertices
and $P$ must be one of them. Similar to the case of $\alpha=0$,
there is no $(l,l)$-vertex in these $(n+2)-$vertices. So, a
B-spline $N_2(x,y)$ is constructed and we denote $k_2$ as the
conformality factor of $N_2(x,y)$ at $P$. By Example
\ref{example:Bspline surface}, $k_2\neq 0$ and the conformality
vector of $N_2(x,y)$ is trivial at the new vertices without $P$ and
those given $(m+2)$ vertices according to $Q$ removed. Then we define
$g(x,y)=-k_1/k_2N_2(x,y)$.

For an example, see Figure \ref{fig:construction}, where $m=3$, $n=3$. Here
$P$ is shown as ``$\Box$" in the figure. $N_1(x,y)$ is
constructed over the tensor product mesh shown in the upper right of
this figure and the one in the lower right is used to construct
$N_2(x,y)$. Compute $k_1$ of $N_1(x,y)$ and $k_2$ of $N_2(x,y)$ at
$P$ as we have shown in Examples \ref{example:Bspline curve } and
\ref{example:Bspline surface}. Then  $f(x,y)$ is given by
 \begin{align}\label{eq:combination}
    f(x,y)&=N_1(x,y)-\frac{k_1}{k_2}N_2(x,y).
 \end{align}
\begin{figure}[htpb]
\begin{center}
\begin{pspicture}(0,0)(11,9)\psset{unit=1cm}
   \psgrid[subgriddiv=1,gridlabels=0pt](0,4)(2,6)
   \psgrid[subgriddiv=2,gridlabels=0pt](0,2)(2,4)
   \psgrid[subgriddiv=2,gridlabels=0pt](2,2)(4,3)
   \psgrid[subgriddiv=2,gridlabels=0pt](3,4)(4,4)
   \psgrid[subgriddiv=2,gridlabels=0pt](2,5)(3,6)
   \psgrid[subgriddiv=2,gridlabels=0pt](3,3)(4,4)
   \psgrid[subgriddiv=4,gridlabels=0pt](2,3)(3,5)
   \psgrid[subgriddiv=4,gridlabels=0pt](3,4)(4,6)
   \psline[linewidth=0.8pt,linecolor=red](2.5,4.625)(3.25,4.625)
   \psline[linewidth=0.8pt,linecolor=red](3.125,4.5)(3.125,5.5)
\psgrid[subgriddiv=1,gridlabels=0pt](7,7)(9,9)
   \psgrid[subgriddiv=2,gridlabels=0pt](7,5)(9,7)
   \psgrid[subgriddiv=2,gridlabels=0pt](9,5)(11,6)
   \psgrid[subgriddiv=2,gridlabels=0pt](10,7)(11,7)
   \psgrid[subgriddiv=2,gridlabels=0pt](9,8)(10,9)
   \psgrid[subgriddiv=2,gridlabels=0pt](10,6)(11,7)
   \psgrid[subgriddiv=4,gridlabels=0pt](9,6)(10,8)
   \psgrid[subgriddiv=4,gridlabels=0pt](10,7)(11,9)
   \psline[linewidth=0.4pt,linecolor=red](9.5,7.625)(10.25,7.625)
   \psline[linewidth=0.4pt,linecolor=red](10.125,7.5)(10.125,8.5)
   \psline[linewidth=0.4pt,linecolor=red,linestyle=dashed](10.125,8.5)(10.125,7)
   \psdots[dotstyle=square](10.125,7)
   \psdots[dotstyle=triangle](10.125,7.5)(10.125,7.625)(10.125,7.75)(10.125,8)(9.5,7.625)(9.75,7.625)(10,7.625)(10.25,7.625)
  \psgrid[subgriddiv=1,gridlabels=0pt](7,2)(9,4)
   \psgrid[subgriddiv=2,gridlabels=0pt](7,0)(9,2)
   \psgrid[subgriddiv=2,gridlabels=0pt](9,0)(11,1)
   \psgrid[subgriddiv=2,gridlabels=0pt](10,2)(11,2)
   \psgrid[subgriddiv=2,gridlabels=0pt](9,3)(10,4)
   \psgrid[subgriddiv=2,gridlabels=0pt](10,1)(11,2)
   \psgrid[subgriddiv=4,gridlabels=0pt](9,1)(10,3)
   \psgrid[subgriddiv=4,gridlabels=0pt](10,2)(11,4)
    \psdots[dotstyle=square](10.125,2)
    \psdots[dotstyle=triangle](10,3.5)(10.125,3.5)(10.25,3.5)(10.5,3.5)(10.75,3.5)(10.125,2.5)(10.125,2.75)(10.125,3)(10.125,3.5)
       \psline[linewidth=0.4pt,linecolor=red](9.5,2.625)(10.25,2.625)
   \psline[linewidth=0.4pt,linecolor=red](10.125,2.5)(10.125,3.5)
   \psline[linewidth=0.4pt,linecolor=red,linestyle=dashed](10.125,3.5)(10.125,2)
   \psline[linewidth=10pt,arrows=->,linecolor=lightgray](4.5,4)(6.5,4)
 \end{pspicture}
  \caption{Construction of the linear combination of B-splines in the case of $m=n=3$.}\label{fig:construction}
  \end{center}
\end{figure}
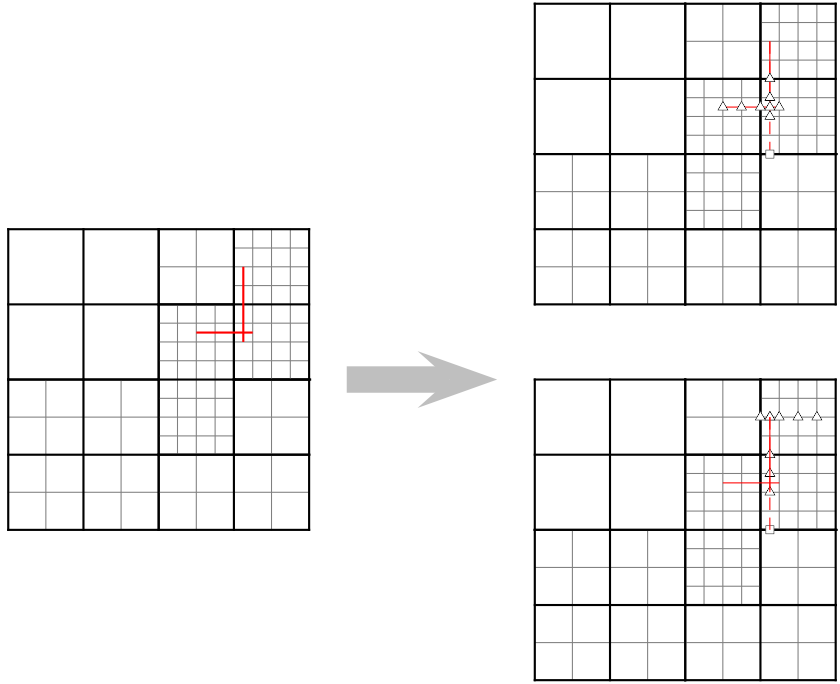
 When $l=1$, a B-spline associated with the given $m+2$
 vertices can be constructed by the tensor product mesh of level 0.

\item $\alpha=2$. If $l\geq2$, there are two $(l,l)-$vertices $P_1$ and $P_2$ in the given $(m+2)$ vertices.
One of the two vertical l-edges at $P_1, P_2$ respectively must go
through $\Sigma'$ that occupied by $(m,n)$-subdomains containing $P_1$
and $P_2$ at level $(l-2)$. So, we can construct the function by the
method presented in the case of
$\alpha=1$. When $l=1$, the function can be constructed by the
tensor product mesh of level 0 and the case $l\geq 2$.
\end{enumerate}

When all the l-edges in the sequence associated with $A_4$  are
deleted, there are no $(l, l)-$vertices at the l-edges in $A_5$. So
we can construct the function we need by the method presented in the
case of $\alpha=0$.

When $l=0$, by the order defined on $\mathbf{E}^0$ and the tensor
product mesh at level 0, B-spline surfaces can always be
constructed. So as stated previously, $\pi$ is surjective.

   \end{proof}
\section*{Acknowledgement}
The authors are supported by 973 Program 2011CB302400, the NSF of China (No. 60873109, 11031007 and 61073108), Program for New Century Excellent Talents in University (No. NCET-08-0514).

\bibliographystyle{ieeetr}
\bibliography{ref}

\end{document}